\newtheoremstyle{newstyle}      
{0pt} 
{0pt} 
{\mdseries} 
{} 
{\bfseries} 
{.} 
{3pt} 
{} 
\theoremstyle{newstyle}
\newtheorem{theorem}{Theorem}
\newtheorem{lemma}{Lemma}
\newtheorem{proposition}{Proposition}
\newtheorem{definition}{Definition}
\newtheorem{example}{Example}
\newcommand{\B}[1]{\mathbf{#1}}
\newcommand{\C}[1]{\mathcal{#1}}
\newcommand{\BB}[1]{\mathbb{#1}}
\newcommand{\myqed}{\hfill $\blacksquare$}
\newcommand{\head}{\mathrm{head}}
\newcommand{\tail}{\mathrm{tail}}
\newcommand{\mincut}{\mathrm{mincut}}
\newcommand{\inedge}{\mathrm{In}}
\newcommand{\outedge}{\mathrm{Out}}
\newcommand{\ie}{\textit{i.e.,~}}
\begin{document}

\title{On Routing-Optimal Networks for Multiple Unicasts}

\author{\IEEEauthorblockN{Chun Meng, Athina Markopoulou\\
EECS Department, University of California, Irvine \\
email: \{cmeng1,athina\}@uci.edu} \vspace{-1cm}
}

\maketitle

\begin{abstract}
In this paper, we consider the problem of multiple unicast sessions over a directed acyclic graph. It is well known that linear network coding is insufficient for achieving the capacity region, in the general case. However, there exist networks for which routing is sufficient to achieve the whole rate region, and we refer to them as {\em routing-optimal networks}. We identify a class of routing-optimal networks, which we refer to as {\em information-distributive networks}, defined by three topological features. Due to these features, for each rate vector
achieved by network coding, there is always a routing scheme such that it achieves the same rate vector, and the traffic transmitted through the network is exactly the information transmitted over the cut-sets between the sources and the sinks in the corresponding network coding scheme. We present examples of information-distributive networks, including some examples from (1) index coding and (2) from a single unicast session with hard deadline constraint.  
\end{abstract}

\section{Introduction \label{secIntro}}

In this paper, we consider network coding for multiple unicast sessions over directed acyclic graphs.
In general, non-linear network coding should be considered in order to achieve the whole rate region of network coding \cite{nonlinear_nc}.
Yet, there exist networks, for which routing is sufficient to achieve the whole rate region.
We refer to these networks as \textit{routing-optimal} networks.
We attempt to answer the following questions: 1) What are the distinct topological features of these networks? 2) Why do these features make a network routing-optimal?
The answers to these questions will not only explain which kind of networks can or cannot benefit from network coding, but will also deepen our understanding on how network topologies affect the rate region of network coding.

A major challenge is that there is currently no effective method to calculate the rate region of network coding.
Some researchers proposed to use information inequalities to approximate the rate region \cite{capacity_info_network}.
However, except for very simple networks, it is very difficult to use this approach since there is potentially an exponential number of inequalities that need to be considered.
\cite{cap_region_multisource} provides a formula to calculate the rate region by finding all possible entropy functions, which are vectors of an exponential number of dimensions, thus very difficult to solve even for simple networks.

In this paper, we employ a graph theoretical approach in conjunction with information inequalities to identify topological features of routing-optimal networks. 
Our high-level idea is as follows.
Consider a network code.
For each unicast session, we choose a cut-set $C$ between source and sink, and a set $\C{P}$ of paths from source to sink such that each path in $\C{P}$ passes through an edge in $C$.
Since the information transmitted from the source is totally contained in the information transmitted along the edges in $C$, we can think of  distributing the source information along the edges in $C$ (details will be explained later).
Moreover, we consider a routing scheme in which the traffic transmitted along each path $P\in \C{P}$ is exactly the source information distributed over the edge in $C$ that is traversed by $P$.
Such a routing scheme achieves the same rate vector as the network code.
However, since the edges might be shared among multiple unicast sessions, such a routing scheme might not satisfy the edge capacity constraints.
This suggests that the cut-sets and path-sets we choose for the unicast sessions should have special features.
These are essentially the features we are looking for to describe routing-optimal networks.

We make the following contributions:
\begin{itemize}
\item We identify a class of networks, called \textit{information-distributive} networks, which are defined by three topological features.
The first two features capture how the edges in the cut-sets are connected to the sources and the sinks, and the third feature captures how the paths in the path-sets overlap with each other.
Due to these features, given a network code, there is always a routing scheme such that it achieves the same rate vector as the network code, and the traffic transmitted through the network is exactly the source information distributed over the cut-sets between the sources and the sinks. 

\item We prove that if a network is information-distributive, it is routing-optimal. 
We also show that the converse is not true.
This indicates that the three features might be too restrictive in describing routing-optimal networks.

\item We present examples of information-distributive networks taken from the index coding problem \cite{index_code} and single unicast with hard deadline constraint.
\end{itemize}

We expect that our work will provide helpful insights towards characterizing all possible routing-optimal networks.

\section{Preliminaries \label{secPrelim}}

\subsection{Network Model}
The network is represented by an acyclic directed multi-graph $G=(V,E)$, where $V$ and $E$ are the set of nodes and the set of edges in the network respectively.
Edges are denoted by $e=(u,v,i) \in V \times V \times \BB{Z}_{\ge 0}$, or simply by $(u,v)$, where $v=\head(e)$ and $u=\tail(e)$.
Each edge represents an error-free and delay-free channel with capacity rate of one.
Let $\inedge(v)$ and $\outedge(v)$ denote the set of incoming edges and the set of outgoing edges at node $v$.

There are $K\ge 1$ unicast sessions in the network.
The $i$th unicast session is denoted by a tuple $\omega_i = (s_i,d_i)$, where $s_i$ and $d_i$ are the source and the sink of $\omega_i$ respectively.
The message sent from $s_i$ to $d_i$ is assumed to be a uniformly distributed random variable $Y_i$ with finite alphabet $\C{Y}_i = \{1,\cdots,\lceil 2^{nR_i} \rceil\}$, where $R_i$ is the source information rate at $s_i$.
All $Y_i$'s are mutually independent.
Given $1\le i \le j \le K$, denote $Y_{i:j} = \{Y_m: i\le m \le j\}$.
We assume $\inedge(s_i)=\outedge(d_i)=\emptyset$ for all $1\le i \le K$.

Let $\mincut(u,v,G)$ denote the minimum capacity of all cut-sets between two nodes $u$ and $v$.
Given two nodes $u,v$, let $\C{P}_{uv}$ denote the set of directed paths from $u$ to $v$.
The \textit{routing domain} of $\omega_i$, denoted by $G_i$, is the sub-graph induced by the edges of the paths in $\C{P}_{s_id_i}$.

\subsection{Routing Scheme}

A \textit{routing scheme} is a transmission scheme where each node only replicates and forwards the received messages onto its outgoing edges.
Define the following linear constraints:
\begin{flalign}
& \sum_{P \in \C{P}_{s_id_i}} f_i(P) \ge R'_i \hspace{2cm} \forall 1\le i \le K \label{eqRoutingCond1} \\
& \sum^K_{i=1} \sum_{P \in \C{P}_{s_id_i}, e\in P} f_i(P) \le 1 \hspace{1cm} \forall e \in E \label{eqRoutingCond2}
\end{flalign}
where $f_i(P)\in \BB{R}_{\ge 0}$ represents the amount of traffic routed through path $P$ for $\omega_i$.
A rate vector $\B{R}=(R'_i:1\le i \le K)\in \BB{R}^K_{\ge 0}$ is achievable by routing scheme if there exist $f_i(P)$'s such that (\ref{eqRoutingCond1}) and (\ref{eqRoutingCond2}) are satisfied.
The rate region of routing scheme, denoted by $\C{R}_r$, is the set of all rate vectors achievable by routing scheme.

\subsection{Network Coding Scheme}

A network coding scheme is defined as follows: \cite{cap_region_multisource}
\begin{definition}
\label{defNC}
An $(n,(\eta_e: e\in E),(R_i:1\le i\le K), (\delta_i: 1\le i \le K))$ \textit{network code} with block length $n$ is defined by:
\begin{enumerate}
\item for each $1\le i \le K$ and $e\in \outedge(s_i)$, a local encoding function: $\phi_e: \C{Y}_i \rightarrow \{1,\cdots,\eta_e\}$;

\item for each $v\in V-\{s_i,d_i:1\le i \le K\}$ and $e\in \outedge(v)$, a local encoding function: $\phi_e: \prod_{e'\in \inedge(v)} \{1,\cdots,\eta_{e'}\} \rightarrow \{1,\cdots,\eta_e\}$;

\item for each $1\le i \le K$, a decoding function: $\psi_i: \prod_{e' \in \inedge(d_i)} \{1,\cdots,\eta_{e'}\} \rightarrow \C{Y}_i$;

\item for each $1\le i \le K$, the decoding error for $\omega_i$ is $\delta_i = Pr(\tilde{\psi}_i(Y_{1:K}) \neq Y_i)$, where $\tilde{\psi}_i(Y_{1:K})$ is the value of $\psi_i$ as a function of $Y_{1:K}$.
\end{enumerate}
\end{definition}

Given $e\in E$, let $U_e = \tilde{\phi_e}(Y_{1:K})$, where $\tilde{\phi_e}(Y_{1:K})$ is the value of $\phi_e$ as a function of $Y_{1:K}$, denote the random variable transmitted along $e$ in a network code.
For a subset $C\subseteq E$, denote $U_C = \{U_e: e\in C\}$.

\begin{definition}
\label{defAchievable}
A rate vector $\B{R} = (R'_i: 1\le i \le K) \in \BB{R}^K_{\ge 0}$ is \textit{achievable} by network coding if for any $\epsilon > 0$, there exists for sufficiently large $n$, an $(n,(\eta_e: e\in E),(R_i:1\le i\le K), (\delta_i: 1\le i \le K))$ network code such that the following conditions are satisfied:
\begin{flalign}
& \frac{1}{n}\log\eta_e \le 1 + \epsilon \hspace{1.8cm} \forall e\in E \label{eqNCAchieve1} \\
& R_i \ge R'_i - \epsilon \hspace{2.4cm} \forall 1\le i \le K \label{eqNCAchieve2} \\
& \delta_i \le \epsilon \hspace{3.3cm} \forall 1\le i \le K \label{eqNCAchieve3}
\end{flalign}
The \textit{capacity region} achieved by network coding, denoted by $\C{R}_{nc}$, is the set of all rate vectors $\B{R}$ achievable by network coding.
\end{definition}

Given a network code that satisfies (\ref{eqNCAchieve1})-(\ref{eqNCAchieve3}), the following inequalities must hold:
\begin{flalign}
& \frac{1}{n} H(U_e) \le \frac{1}{n} \log(\eta_e) \le 1 + \epsilon \hspace{2cm} \forall e\in E \label{eqNCAchieveI1} \\
& \frac{1}{n} H(Y_i) = \frac{1}{n} \log(\lceil 2^{nR_i} \rceil) \ge R_i \ge R'_i - \epsilon \hspace{.3cm} \forall 1\le i \le K \label{eqNCAchieveI2} \\
& \frac{1}{n} I(Y_i;U_{\inedge(d_i)}) \ge (1-\epsilon)(R'_i-\epsilon) - \frac{1}{n} \hspace{.65cm} \forall 1\le i \le K \label{eqNCAchieveI3}
\end{flalign}
where (\ref{eqNCAchieveI3}) is due to Fano's Inequality:
\begin{flalign*}
& \frac{1}{n} I(Y_i;U_{\inedge(d_i)}) \ge \frac{1}{n}(H(Y_i) - \delta_i \log|\C{Y}_i| - 1) \\
=& \frac{1}{n} (1 - \delta_i) H(Y_i) - \frac{1}{n} \ge (1-\epsilon) (R'_i - \epsilon) - \frac{1}{n}
\end{flalign*}

\subsection{Routing-Optimal Networks}

Since routing scheme is a special case of network coding, $\C{R}_r \subseteq \C{R}_{nc}$.

\begin{definition}
A network is said to be \textit{routing-optimal}, if $\C{R}_{nc} = \C{R}_r$, \ie for such network, routing is sufficient to achieve the whole rate region of network coding.
\end{definition}

\section{A Class of Routing-Optimal Networks \label{secRoutingOpt}}

In this section, we present a class of routing-optimal networks, called \textit{information-distributive} networks.
We first use examples to illustrate the topological features of these networks, and show why they make the networks routing-optimal.
Then, we define these networks more rigorously.

\subsection{Illustrative Examples \label{subsecExample}}

\begin{example}
\label{ex1}
We start with the simplest case of single unicast.
It is well known that for this case, a network is always routing-optimal \cite{nc_first}.
In this example, we re-investigate this case from a new perspective in order to highlight some of the important features that make it routing optimal.
Let $m=\mincut(s_1,d_1,G)$, and $C=\{e_1,\cdots,e_m\}$ is a cut-set between $s_1$ and $d_1$.
Assume $R'_1\in \C{R}_{nc}$.
Therefore, for $\epsilon = \frac{1}{k} > 0$ ($k\in \BB{Z}_{> 0}$), there exists a network code such that (\ref{eqNCAchieve1})-(\ref{eqNCAchieve3}) are satisfied.
In the followings, all the random variables are defined in this network code.

One important feature of this network is that each path from $s_1$ to $d_1$ must pass through at least an edge in $C$.
Thus, $U_{\inedge(d_1)}$ is a function of $U_C$.
The following inequality holds:
\begin{flalign}
\begin{split}
\label{eqEx1Determine}
I(Y_1;U_{\inedge(d_1)}) \le I(Y_1;U_C)
\end{split}
\end{flalign}

The following equation holds:
\begin{flalign}
\label{eqEx1Distr}
I(Y_1;U_C) = \sum^m_{j=1} I(Y_1;U_{e_j} | U_{\{e_1,\cdots,e_{j-1}\}})
\end{flalign}
Intuitively, we can interpret (\ref{eqEx1Distr}) as follows:
$I(Y_1;U_{e_1})$ is the amount of information about $Y_1$ that can be obtained from $U_{e_1}$, $I(Y_1;U_{e_2}|U_{e_1})$ the amount of information about $Y_1$ that can be obtained from $U_{e_2}$, excluding those already obtained from $U_{e_1}$, and so on.
Hence, (\ref{eqEx1Distr}) can be seen as a ``distribution'' of the source information over the edges in $C$.
Moreover, for each $1\le j \le m$, we have:
\begin{flalign}
\label{eqEx1Cap}
I(Y_1; U_{e_j} | U_{\{e_1,\cdots,e_{j-1}\}}) \le H(U_{e_j})
\end{flalign}

Another important feature is that due to Menger's Theorem, there exist $m$ edge-disjoint paths, $P_1,\cdots,P_m$, from $s_1$ to $d_1$ such that $e_j \in P_j$ for $1\le j \le m$.
Due to this feature, we can construct a routing scheme by simply letting each $P_j$ transmit the information distributed on $e_j$:
\begin{flalign}
\label{eqEx1Routing}
f^{n,k}(P) = \begin{cases}
\frac{1}{n} I(Y_1;U_{e_j}|U_{\{e_1,\cdots,e_{j-1}\}}) & \text{if } P = P_j, 1\le j \le m \\
0 & \text{otherwise}.
\end{cases}
\end{flalign}
Clearly, due to (\ref{eqNCAchieveI1}) and (\ref{eqEx1Cap}), the above routing scheme satisfies the following inequalities:
\begin{flalign}
\label{eqEx1RoutingCap}
& f^{n,k}(P_j) \le \frac{1}{n} H(U_{e_j}) \le 1 + \frac{1}{k}
\end{flalign}
Moreover, due to (\ref{eqNCAchieveI3})-(\ref{eqEx1Distr}), we have:
\begin{flalign}
\begin{split}
\label{eqEx1RoutingRate}
& \sum_{P\in \C{P}_{s_1d_1}} f^{n,k}(P) = \sum^m_{j = 1} f^{n,k}(P_j) = \frac{1}{n} I(Y_1;U_C) \\
\ge & \frac{1}{n} I(Y_1;U_{\inedge(d_1)}) \ge \bigg(1-\frac{1}{k}\bigg) \bigg(R'_i - \frac{1}{k}\bigg) - \frac{1}{n}
\end{split}
\end{flalign}

Since $f^{n,k}(P_j)$ have an upper bound (see (\ref{eqEx1RoutingCap})), there exists a sub-sequence $(n_l,k_l)^{\infty}_{l=1}$ such that each sequence $(f^{n_l,k_l}(P_j))^{\infty}_{l=1}$ approaches a finite limit.
Define the following routing scheme:
\begin{flalign*}
f_1(P) = \begin{cases}
\lim_{l\rightarrow \infty} f^{n_l,k_l}(P) & \text{if } P = P_j (1\le j \le m); \\
0 & \text{otherwise}.
\end{cases}
\end{flalign*}
Due to (\ref{eqEx1RoutingCap}) and (\ref{eqEx1RoutingRate}), the above routing scheme satisfies (\ref{eqRoutingCond1}) and (\ref{eqRoutingCond2}).
Hence, $R'_1\in \C{R}_r$, which implies $\C{R}_{nc} \subseteq \C{R}_r$.
Therefore, the network is routing-optimal.
\myqed
\end{example}

\begin{figure}
\centering
\subfloat[2 unicast sessions \label{figEx0TwoUnicast}]{\includegraphics[scale=.45]{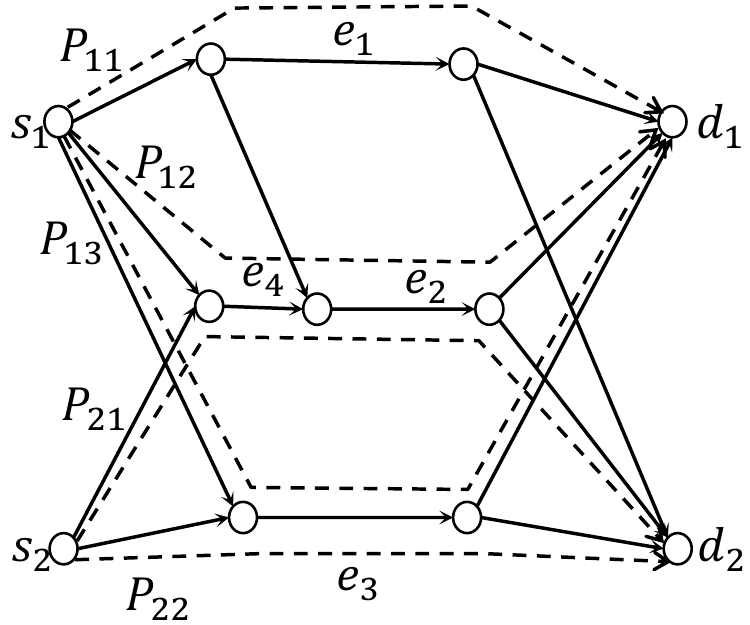}}
\subfloat[3 unicast sessions \label{figEx0ThreeUnicast}]{\includegraphics[scale=.45]{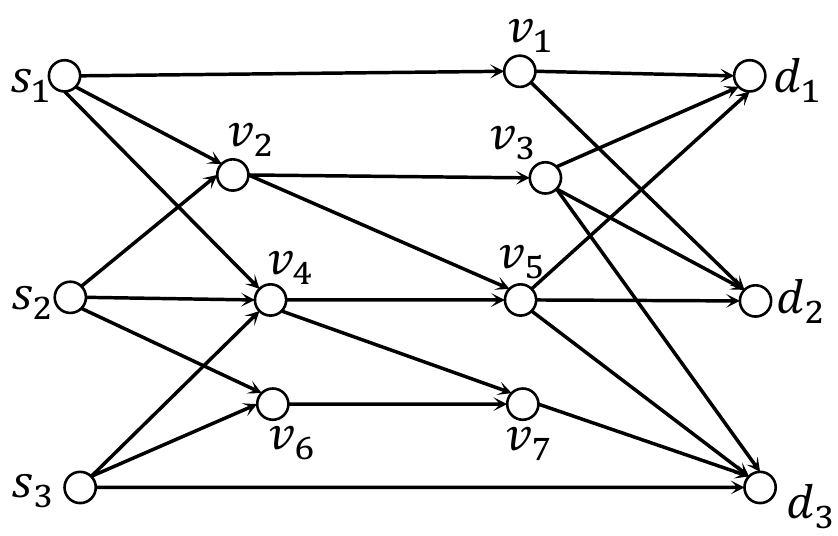}}
\caption{Examples of information-distributive networks, where $s_i,d_i$ ($1\le i \le 3$) are the source and the sink of the $i$th unicast session respectively. \label{figRoutingOptExample}}
\end{figure}

As shown above, two features are essential in making a network with single-unicast routing-optimal.
The first feature is the existence of a cut-set such that each path from the source to the sink must pass through an edge in the cut-set.
Due to this feature, the source information contained in $U_{\inedge(d_1)}$ can be completely obtained from the messages transmitted through the cut-set $C$ (see (\ref{eqEx1Determine})).
The second feature is the existence of edge-disjoint paths $P_1,\cdots,P_m$, each of which passes through exactly one edge in $C$.
Due to this feature, a routing scheme can be constructed such that the traffic transmitted along the  paths $P_1,\cdots,P_m$ is exactly the information distributed on the edges in $C$ (see (\ref{eqEx1Routing})).
These two features together guarantee that the routing scheme achieves the same rate as network coding (see (\ref{eqEx1RoutingCap}), (\ref{eqEx1RoutingRate})).

However, extending these features to multiple unicast sessions is not straightforward.
One difference from single unicast is that $U_{\inedge(d_i)}$ may not be a function of $U_C$, where $C$ is a cut-set between $s_i$ and $d_i$, and thus (\ref{eqEx1Determine}) might not hold.
Another difference is that the information from multiple unicast sessions might be distributed on an edge, and thus (\ref{eqEx1Cap}) might not hold.
Moreover, the paths for multiple unicast sesssions might overlap with each other, and thus (\ref{eqEx1RoutingCap}) might not hold.
These differences suggest that the cut-sets and the paths, over which a routing scheme is to be constructed, should have additional features in order for the resulting routing scheme to achieve the same rate vector as network coding.
We use an example to illustrate some of these features.

\begin{example}
\label{ex2}
Consider the network shown in Fig. \ref{figEx0TwoUnicast}a.
Consider an arbitrary rate vector $\B{R}=(R'_1,R'_2) \in \C{R}_{nc}$.
Therefore, for $\epsilon=\frac{1}{k}$ ($k\in \BB{Z}_{> 0}$), there exists a network code that satisfies (\ref{eqNCAchieve1})-(\ref{eqNCAchieve3}).
In the sequel, all the random variables are defined in this network code.

For $\omega_1$, we choose a cut-set $C_1=\{e_1,e_2,e_3\}$ between $s_1$ and $d_1$, and a set of paths $\C{P}_1= \{P_{11},P_{12},P_{13}\}$ that pass through $e_1,e_2,e_3$ respectively; 
for $\omega_2$, we choose a cut-set $C_2=\{e_2,e_3\}$ between $s_2$ and $d_2$, and a set of paths $\C{P}_2=\{P_{21},P_{22}\}$ that pass through $e_2,e_3$ respectively.

We first investigate $C_1,C_2$.
One important feature is that each path from $s_2$ to $d_1$ passes through at least an edge in $C_1$.
Thus, $C_1$ is also a cut-set between $\{s_1,s_2\}$ and $d_1$, and $U_{\inedge(d_1)}$ is a function of $U_{C_1}$.
Hence, we have:
\begin{flalign}
\label{eqEx2Determine1}
I(Y_1;U_{\inedge(d_1)}) \le I(Y_1;U_{C_1})
\end{flalign}
Moreover, $\outedge(s_1) \cup C_2$ is a cut-set between $\{s_1,s_2\}$ and $d_2$, and $U_{\outedge(s_1)}$ is a function of $Y_1$.
Hence $U_{\inedge(d_2)}$ is a function of $Y_1,U_{C_2}$, which implies:
\begin{flalign}
\label{eqEx2Determine2}
I(Y_2;U_{\inedge(d_2)} | Y_1) \le I(Y_2;U_{C_2} | Y_1) 
\end{flalign}
We distribute the source information over $C_1,C_2$ as follows:
\begin{flalign}
\label{eqEx2Distr}
\begin{split}
& I(Y_1;U_{C_1}) = I(Y_1;U_{e_1}) + I(Y_1;U_{e_2} | U_{e_1}) \\
& \hspace{2cm} + I(Y_1;U_{e_3}|U_{\{e_1,e_2\}}) \\
& I(Y_2;U_{C_2} | Y_1) = I(Y_2;U_{e_2} | Y_1) + I(Y_2;U_{e_3} | Y_1, U_{e_2})
\end{split}
\end{flalign}

Another feature about $C_1,C_2$ is that edge $e_1$ is connected to only one source $s_1$, and thus $U_{e_1}$ is a function of $Y_1$.
As shown below, this feature guarantees that the information distributed on an edge $e\in C_1\cup C_2$ is completely contained in $U_e$.
First, for $e_1$, it can be easily seen that:
\begin{flalign}
\label{eqEx2Cap1}
I(Y_1;U_{e_1}) \le H(U_{e_1})
\end{flalign}
For $e_2$, we have:
\begin{flalign}
\label{eqEx2Cap2}
\begin{split}
& I(Y_1; U_{e_2} | U_{e_1}) + I(Y_2; U_{e_2} | Y_1) \\
\overset{(b)}{=}& I(Y_1; U_{e_2} | U_{e_1}) + I(Y_2; U_{e_2} | Y_1, U_{e_1}) \\
=& I(Y_1,Y_2; U_{e_2} | U_{e_1}) \le H(U_{e_2})
\end{split}
\end{flalign}
where $(b)$ is due to the fact that $U_{e_1}$ is a function of $Y_1$, and thus, $I(Y_2; U_{e_2} | Y_1) = I(Y_2; U_{e_2} | Y_1, U_{e_1})$.
Similarly, for $e_3$, we have:
\begin{flalign}
\label{eqEx2Cap3}
\begin{split}
& I(Y_1; U_{e_3} | U_{\{e_1,e_2\}}) + I(Y_2; U_{e_3} | Y_1, U_{e_2}) \\
\overset{(c)}{=}& I(Y_1; U_{e_3} | U_{\{e_1,e_2\}}) + I(Y_2; U_{e_3} | Y_1, U_{\{e_1,e_2\}}) \\
=& I(Y_1,Y_2; U_{e_3} | U_{\{e_1,e_2\}}) \le H(U_{e_3})
\end{split}
\end{flalign}
where $(c)$ is again due to the fact that $U_{e_1}$ is a function of $Y_1$.

Next, we investigate $\C{P}_1, \C{P}_2$.
One important feature is that if $P\in \C{P}_1$ overlaps with $P'\in \C{P}_2$, $P\cap C_1 = P'\cap C_2$.
For example, $P_{12}$ overlaps with $P_{21}$, and $P_{12} \cap C_1 = P_{21} \cap C_2 = \{e_2\}$.
This feature ensures that the information distributed over $C_1,C_2$ can be further distributed over the paths in $\C{P}_1,\C{P}_2$.
To see this, we construct the following routing scheme:
\begin{flalign*}
& f^{n,k}_1(P) = \begin{cases}
\frac{1}{n} I(Y_1;U_{e_j}| U_{\{e_1,\cdots,e_{j-1}\}}) & \text{if } P=P_{1j},1\le j \le 3 \\
0 & \text{otherwise}.
\end{cases} \\
& f^{n,k}_2(P) = \begin{cases}
\frac{1}{n} I(Y_2;U_{e_2} | Y_1) & \text{if } P=P_{21}; \\
\frac{1}{n} I(Y_2;U_{e_3} | Y_1, U_{e_2}) & \text{if } P=P_{22}; \\
0 & \text{otherwise}.
\end{cases} 
\end{flalign*}
Due to (\ref{eqEx2Cap1})-(\ref{eqEx2Cap3}), we can derive that for each $e\in C_1\cup C_2$,
\begin{flalign}
\label{eqEx2RoutingCap}
\sum^2_{i=1} \sum_{P\in \C{P}_{s_id_i}, e\in P} f^{n,k}_i(P) \le \frac{1}{n}H(U_e) \le 1 + \frac{1}{k}
\end{flalign}
For $e_4$, we have:
\begin{flalign*}
&\sum^2_{i=1} \sum_{P\in \C{P}_{s_id_i}, e_4\in P} f^{n,k}_i(P) \\
=& f^{n,k}_1(P_{12}) + f^{n,k}_2(P_{21}) \le \frac{1}{n}H(U_{e_2}) \le 1 + \frac{1}{k}
\end{flalign*}
Likewise, we can prove that (\ref{eqEx2RoutingCap}) holds for all the other edges of the paths in $\C{P}_1 \cup \C{P}_2$.
Due to (\ref{eqEx2Determine1})-(\ref{eqEx2Distr}), the following inequalities hold for $i=1,2$
\begin{flalign}
\label{eqEx2RoutingRate}
\sum_{P\in \C{P}_{s_id_i}} f^{n,k}_i(P) \ge \bigg(1 - \frac{1}{k}\bigg)\bigg(R'_i-\frac{1}{n}\bigg) + \frac{1}{n}
\end{flalign}
By (\ref{eqEx2RoutingCap}), there exists a sub-sequence $(n_l,k_l)^{\infty}_{l=1}$ such that for all $P\in \C{P}_1\cup \C{P}_2$ and $i=1,2$, the sub-sequence $(f^{n_l,k_l}_i(P))^{\infty}_{l=1}$ approaches a finite limit.
Define a routing scheme:
\begin{flalign}
f_i(P) = \begin{cases}
\lim_{l\rightarrow \infty} f^{n_l,k_l}_i(P) & \text{if } P \in \C{P}_i, i=1,2; \\
0 & \text{otherwise.}
\end{cases}
\end{flalign}
Due to (\ref{eqEx2RoutingCap}) and (\ref{eqEx2RoutingRate}), $f_i(P)$ satisfies (\ref{eqRoutingCond1}) and (\ref{eqRoutingCond2}).
Hence, $\B{R} \in \C{R}_r$, and $\C{R}_{nc} \subseteq \C{R}_r$.
The network is routing-optimal.
\myqed
\end{example}

\subsection{Information Distributive Networks \label{subsecInfoDistr}}

In this subsection, we present the definition of information-distributive networks.
Similarly to single unicast, for each unicast session $\omega_i$ ($1\le i \le K$), we choose a cut-set $C_i$ between $s_i$ and $d_i$ such that $|C_i|=\mincut(s_i,d_i,G_i)$, and a set of paths $\C{P}_i$ from $s_i$ to $d_i$.
The collection of these cut-sets, denoted by $\C{W} = (C_i)^K_{i=1}$, is called a \textit{cut-set sequence}, and the collection of these path-sets, denoted by $\C{K}=(\C{P}_i)^K_{i=1}$, is called a \textit{path-set sequence}.
For instance, in Example \ref{ex2}, we choose a cut-set sequence $\C{W}=(C_i)^2_{i=1}$, where $C_1=\{e_1,e_2,e_3\}$ is a cut-set between $s_1$ and $d_1$, and $C_2=\{e_2,e_3\}$ is a cut-set between $s_2$ and $d_2$, and a path-set sequence $\C{K}=(\C{P}_i)^2_{i=1}$, where $\C{P}_1$ is a path-set from $s_1$ to $d_1$, and $\C{P}_2$ a path-set from $s_2$ to $d_2$.
Moreover, we arrange the edges in each cut-set in $\C{W}$ in some ordering.
For instance, in Example \ref{ex2}, we arrange the edges in $C_1$ in the ordering $T_1=(e_1,e_2,e_3)$, and the edges in $C_2$ in the ordering $T_2=(e_2,e_3)$.
Each such ordering is called a permutation of the edges in the corresponding cut-set.
The collection of these permutations, denoted $\C{T} = (T_i)^K_{i=1}$, is called a \textit{permutation sequence}.
For $e\in C_i$, let $T_i(e)$ denote the subset of edges before $e$ in $T_i$.
For $e\in E$, define $\C{W}(e)=\{C_i \in \C{W}: e\in C_i\}$, and $\alpha(e)$ the largest index of the source to which $\tail(e)$ is connected.
The first feature is described below.

Next, we formalize the three features we have shown in Example \ref{ex2}.
The first feature is described below.

\begin{definition}
\label{defCumulative}
Given a cut-set sequence $\C{W}$, if for all $1\le i < j \le K$, each path from $s_j$ to $d_i$ must pass through an edge in $C_i$, we say that $\C{W}$ is \textit{cumulative}.
\end{definition}

This feature guarantees that the source information contained in the incoming messages at each sink $d_i$ can be completely obtained from $Y_{1:i-1},U_{C_i}$.
\begin{lemma}
\label{lemmaDetermine}
Consider a network code as defined in Definition \ref{defNC}.
If $\C{W}$ is a cumulative cut-set sequence, then for each $1\le i \le K$, $Y_i$ is a function of $Y_{1:i-1}, U_{C_i}$, and the following inequality holds:
\begin{flalign}
\label{eqDetermine}
I(Y_i; U_{\inedge(d_i)} | Y_{1:i-1}) \le I(Y_i; U_{C_i} | Y_{1:i-1})
\end{flalign}
\end{lemma}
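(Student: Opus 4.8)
The plan is to reduce everything to one structural fact: that $U_{\inedge(d_i)}$ is a deterministic function of $(Y_{1:i-1},U_{C_i})$. Once this is established, the statement about $Y_i$ follows because the sink's reconstruction $\tilde{\psi}_i$ is itself a function of $U_{\inedge(d_i)}$, and inequality (\ref{eqDetermine}) follows from the (conditional) data-processing inequality. So the whole proof is a cut argument plus a short information-theoretic epilogue.

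First I would combine the two hypotheses on $C_i$ into a single separation statement. Since $\C{W}$ is a cut-set sequence, $C_i$ is a cut-set between $s_i$ and $d_i$, so every path from $s_i$ to $d_i$ meets $C_i$; since $\C{W}$ is cumulative, every path from $s_j$ to $d_i$ with $j>i$ also meets $C_i$. Hence $C_i$ separates the source set $\{s_i,\dots,s_K\}$ from $d_i$: no directed path from any $s_j$ with $j\ge i$ can reach $d_i$ without traversing an edge of $C_i$. The sources $s_j$ with $j<i$ are deliberately left unconstrained, which is exactly why they must appear in the conditioning variable $Y_{1:i-1}$.

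Next I would localize the computation that produces $U_{\inedge(d_i)}$. Let $R$ be the set of nodes from which $d_i$ is reachable in the graph $G$ with the edges of $C_i$ deleted. By the separation property, $R$ contains no source $s_j$ with $j\ge i$, for otherwise that source would reach $d_i$ while avoiding $C_i$. The key closure observation is that any edge $e'=(u,v)$ with $v\in R$ and $e'\notin C_i$ must have $u\in R$ as well, since $u\to v\to\cdots\to d_i$ is then a path avoiding $C_i$. Consequently every edge entering a node of $R$ is either a member of $C_i$ or originates within $R$. I would then process the nodes of $R$ in a topological order and prove by induction that $U_e$ is a function of $(Y_{1:i-1},U_{C_i})$ for every edge $e$ leaving a node of $R$, and in particular for every edge of $\inedge(d_i)$. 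The base cases are an edge of $C_i$ entering $R$, whose message is a coordinate of $U_{C_i}$, and an outgoing edge of a source $s_j\in R$, which necessarily has $j<i$ so that $\phi_e$ outputs a function of $Y_j$, hence of $Y_{1:i-1}$. For an internal node $v\in R$, each incoming edge is in $C_i$ or comes from an earlier node of $R$, so its message is already a function of $(Y_{1:i-1},U_{C_i})$ by the induction hypothesis, and applying $\phi_e$ preserves this. Taking $v=d_i$ shows that $U_{\inedge(d_i)}$, and therefore the reconstruction $\tilde{\psi}_i(U_{\inedge(d_i)})$ of $Y_i$, is a function of $(Y_{1:i-1},U_{C_i})$.

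For the final inequality I would note that the function property gives $I(Y_i;U_{\inedge(d_i)}\mid Y_{1:i-1},U_{C_i})=0$, and then expand $I(Y_i;U_{C_i},U_{\inedge(d_i)}\mid Y_{1:i-1})$ by the chain rule in two orders: peeling off $U_{\inedge(d_i)}$ last gives exactly $I(Y_i;U_{C_i}\mid Y_{1:i-1})$, while peeling off $U_{C_i}$ last gives $I(Y_i;U_{\inedge(d_i)}\mid Y_{1:i-1})$ plus a nonnegative term, and comparing the two yields (\ref{eqDetermine}). The main obstacle is the induction step above: the whole argument rests on the closure property that non-$C_i$ edges entering $R$ have their tails inside $R$, which guarantees the topological recursion never picks up a dependence on a source $s_j$ with $j\ge i$ except through $C_i$. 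This is precisely where the cumulative hypothesis does its work, and it is the one place where a careless treatment of edges crossing into $R$ could break the proof.
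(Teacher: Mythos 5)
Your proposal is correct and takes essentially the same route as the paper's proof: both reduce the lemma to the fact that $U_{\inedge(d_i)}$ is a function of $(Y_{1:i-1},U_{C_i})$ and then conclude by conditional data processing (the paper invokes its Proposition 2; your two-way chain-rule expansion is exactly that proposition's proof). The only difference is level of detail: the paper obtains the functional dependence in one line by observing that $C_i$ together with the outgoing edges of $s_1,\dots,s_{i-1}$ forms a cut-set between all sources and $d_i$, whereas you prove that same cut-determines-downstream fact from scratch via the reachability set $R$ and a topological-order induction.
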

\begin{proof}
See Appendix \ref{appInfoDistr}.
\end{proof}

Given a cumulative cut-set sequence $\C{W}$ and a permutation sequence $\C{T}$ for $\C{W}$, we can distribute the source information $Y_i$ over the edges in $C_i$ as follows:
\begin{flalign}
\label{eqInfoDistr}
I(Y_i; U_{C_i} | Y_{1:i-1}) = \sum_{e\in C_i} I(Y_i; U_e | Y_{1:i-1}, U_{T_i(e)})
\end{flalign}

The second feature is presented below.
Without loss of generality, let $\C{W}(e) = \{C_{n_1},\cdots,C_{n_k}\}$, where $1\le n_1 < \cdots < n_k \le K$.

\begin{definition}
\label{defDistributive}
Given a cut-set sequence $\C{W}$, we say that it is \textit{distributive} if there exists a permutation sequence $\C{T}$ for $\C{W}$ such that for each $e\in \bigcup^K_{i=1}C_i$, the following conditions are satisfied: for all $1\le j < k$,
\begin{flalign}
& \alpha(e') \le n_k \hspace{1.9cm} \forall e'\in T_{n_{j+1}}(e) - T_{n_j}(e) \label{eqDistr1} \\ 
& \alpha(e') \le n_{j+1} - 1 \hspace{1cm} \forall e'\in T_{n_j}(e) - T_{n_{j+1}}(e) \label{eqDistr2}
\end{flalign}
\end{definition}

As shown in Example \ref{ex2}, let $T_1=(e_1,e_2,e_3)$, and $T_2=(e_2,e_3)$.
For $e_3$, $\C{W}(e_3)=\{C_1,C_2\}$, $T_2(e_3)-T_1(e_3) = \emptyset$, and thus, (\ref{eqDistr1}) is trivially satisfied;  $T_1(e_3)-T_2(e_3) = \{e_1\}$, $\alpha(e_1)=1$, and (\ref{eqDistr2}) is satisfied.
Similarly, we can verify other edges.
Hence, $\C{W}$ is distributive.

The above two features ensure that the information from multiple unicast sessions that is distributed on an edge $e\in \bigcup^K_{i=1} C_i$ can be completely obtained from $U_e$.

\begin{lemma}
\label{lemmaInfoDistr}
Consider a network code as defined in Definition \ref{defNC}.
Given a cumulative cut-set sequence $\C{W}$, if $\C{W}$ is distributive, for each $e\in \bigcup^K_{i=1} C_i$, the following inequality holds:
\begin{flalign}
\label{eqTotalDistr}
\sum_{1\le i\le K, e\in C_i} I(Y_i; U_e | Y_{1:i-1}, U_{T_i(e)}) \le H(U_e)
\end{flalign}
\end{lemma}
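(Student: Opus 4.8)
The plan is to prove Lemma~\ref{lemmaInfoDistr} by summing the per-session mutual information terms in the order dictated by the permutations and collapsing the sum into a single conditional mutual information $I(Y_{n_1},\cdots,Y_{n_k}; U_e \mid \cdots)$ that is bounded above by $H(U_e)$. The two conditions~(\ref{eqDistr1}) and~(\ref{eqDistr2}) of Definition~\ref{defDistributive} are precisely what is needed to rewrite the conditioning sets so that the chain rule for mutual information applies cleanly, exactly as the explicit computation for $e_2$ and $e_3$ in Example~\ref{ex2} suggests. This generalizes steps $(b)$ and $(c)$ there, where the key move was to enlarge or shrink the conditioning set without changing the mutual information.

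The key steps, in order, are as follows. First, I would fix $e$ and write $\C{W}(e) = \{C_{n_1},\cdots,C_{n_k}\}$ with $n_1 < \cdots < n_k$, so the left-hand side of~(\ref{eqTotalDistr}) is $\sum_{j=1}^{k} I(Y_{n_j}; U_e \mid Y_{1:n_j - 1}, U_{T_{n_j}(e)})$. Second, for each term I would argue that the conditioning variables $U_{T_{n_j}(e)}$ can be replaced by a \emph{common} set $U_{T^*(e)}$ together with some of the $Y_m$'s already present in the conditioning, without altering the value of the mutual information. The mechanism is the observation, used repeatedly in Example~\ref{ex2}, that if $e'$ satisfies $\alpha(e') \le m$ then $U_{e'}$ is a function of $Y_{1:m}$; hence conditioning on such a $U_{e'}$ is redundant whenever $Y_{1:m}$ is already in the conditioning set. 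Condition~(\ref{eqDistr2}) ensures that any edge $e'$ lying in $T_{n_j}(e)$ but not in $T_{n_{j+1}}(e)$ has $\alpha(e') \le n_{j+1}-1$, so that $U_{e'}$ is a function of $Y_{1:n_{j+1}-1}$ and may be dropped from the $j$th conditioning set when we pass up to level $j+1$; condition~(\ref{eqDistr1}) controls the edges added in the other direction. Third, after aligning all conditioning sets in this way, I would apply the chain rule
\begin{flalign*}
\sum_{j=1}^{k} I(Y_{n_j}; U_e \mid Y_{1:n_j-1}, U_{T^*(e)}) = I(Y_{n_1},\cdots,Y_{n_k}; U_e \mid U_{T^*(e)}),
\end{flalign*}
using the mutual independence of the $Y_i$'s to handle the interleaved $Y_m$ indices not in $\{n_1,\cdots,n_k\}$, and finally bound the right-hand side by $H(U_e \mid U_{T^*(e)}) \le H(U_e)$.

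The hard part will be the bookkeeping in the second step: making precise the claim that the two conditions~(\ref{eqDistr1}) and~(\ref{eqDistr2}) let one telescope the mismatched conditioning sets $U_{T_{n_j}(e)}$ into a uniform form so the chain rule applies. Concretely, one must verify that every edge appearing in a symmetric difference $T_{n_{j+1}}(e) \triangle T_{n_j}(e)$ is ``removable'' at the appropriate level, i.e.\ its transmitted variable is a deterministic function of the $Y$-block already being conditioned on, and that these removals can be performed consistently across all $k$ terms rather than just between two adjacent levels. I would handle this by induction on $j$, maintaining the invariant that after processing level $j$ the accumulated conditioning set is $Y_{1:n_{j+1}-1}$ together with a fixed edge-set, and checking that the boundary-case indices (the edges whose $\alpha$-value equals $n_k$ or $n_{j+1}-1$) fall on the correct side of each inequality. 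Once this alignment is established, the collapse into a single mutual information and the entropy bound are routine.
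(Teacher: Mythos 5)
Your high-level plan---exploit the functional dependencies implied by $\alpha(\cdot)$ to manipulate conditioning sets, then collapse the sum via the chain rule and bound by $H(U_e)$---is indeed the skeleton of the paper's argument, but the specific way you propose to execute it has a genuine gap. You claim that each individual term $I(Y_{n_j}; U_e \mid Y_{1:n_j-1}, U_{T_{n_j}(e)})$ can have its conditioning replaced by a \emph{common} set $U_{T^*(e)}$ ``without altering the value,'' and that the chain rule is applied only after all terms have been aligned. Conditions (\ref{eqDistr1}) and (\ref{eqDistr2}) cannot support this order of operations. Removing a conditioning variable $U_{e'}$ is safe only when $U_{e'}$ is a function of variables already present: a function of the conditioning alone gives an equality (Proposition \ref{propIt1}), while a function of the first argument together with the conditioning gives an inequality in the right direction (Proposition \ref{propMutualInfoIneq3}). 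But condition (\ref{eqDistr1}) only guarantees $\alpha(e') \le n_k$ for $e' \in T_{n_{j+1}}(e) - T_{n_j}(e)$, i.e.\ $U_{e'}$ is a function of $Y_{1:n_k}$, whereas in the isolated $j$th term the available source variables reach only up to $Y_{n_j}$. So when $n_j < n_k$ neither proposition applies, and the alignment of that term is simply not licensed. The drops permitted by (\ref{eqDistr1}) become legal only \emph{after} the terms $j+1,\dots,k$ have been merged, so that the first argument of the accumulated mutual information is the whole block $Y_{n_p:n_k}$ and, together with the conditioned $Y_{1:n_p-1}$, contains $Y_{1:n_k}$. Alignment and merging must therefore be interleaved; ``align everything to $T^*(e)$, then apply the chain rule'' does not go through. (Moreover, these drops are strict inequalities in general, so the alignment is not value-preserving in any case; and your final chain-rule identity should be an inequality, since the terms indexed by $\{n_1,\dots,n_k\}$ form only a sub-sum of the full chain-rule expansion.)

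This defect propagates into your proposed induction. Your invariant---a forward pass in which, after processing level $j$, the accumulated conditioning is ``$Y_{1:n_{j+1}-1}$ together with a fixed edge-set''---is the wrong shape. The paper's induction runs \emph{backward}, from $p=k$ down to $p=1$, establishing
\[
\sum^k_{i=p} I(Y_{n_i}; U_e \mid Y_{1:n_i-1}, U_{T_{n_i}(e)}) \le I(Y_{n_p:n_k}; U_e \mid Y_{1:n_p-1}, U_{T_{n_p}(e)}),
\]
and the conditioning edge-set is \emph{not} fixed: at each step it is rewritten from $T_{n_p}(e)$ to $T_{n_{p-1}}(e)$ by (i) dropping $U_{T_{n_p}(e)-T_{n_{p-1}}(e)}$, legal by (\ref{eqDistr1}) and Proposition \ref{propMutualInfoIneq3} precisely because the accumulated first argument is $Y_{n_p:n_k}$, and (ii) adding $U_{T_{n_{p-1}}(e)-T_{n_p}(e)}$, legal by (\ref{eqDistr2}) and Proposition \ref{propIt1} because those variables are functions of $Y_{1:n_p-1}$, which already sits in the conditioning. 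A forward pass fails exactly at move (i): when merging level $j+1$ into levels $1,\dots,j$, the edges of $T_{n_{j+1}}(e)-T_{n_j}(e)$ may have $\alpha$ as large as $n_k > n_{j+1}$, and no block of $Y$'s available at that stage absorbs them. So the direction of the induction and the step-by-step updating of the conditioning set are not bookkeeping details to be checked at the end---they are the substance of the proof, and the argument as you have sketched it would not close.
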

\begin{proof}
See Appendix \ref{appInfoDistr}.
\end{proof}

The third feature is presented below.

\begin{definition}
\label{defExtendable}
Given a path-set sequence $\C{K}$ for $\C{W}$, we say that $\C{K}$ is \textit{extendable}, if for all $1\le i < j \le K$, $P_1\in \C{P}_i$ and $P_2\in \C{P}_j$ such that $P_1$ overlaps with $P_2$, $P_1\cap C_i = P_2 \cap C_j$.
\end{definition}

As shown in Example \ref{ex2}, let $\C{K}=\{\C{P}_1,\C{P}_2\}$.
Clearly, we have $P_{12} \cap P_{21} = \{e_2,e_4\}$, $P_{13} \cap C_1 = P_{21} \cap C_2 = \{e_2\}$, and $P_{13} \cap P_{22} = \{e_3\}$, $P_{13} \cap C_1 = P_{22} \cap C_2 = \{e_3\}$.
Thus, $\C{K}$ is extendable.

\begin{definition}
\label{defInfoDistributive}
A network with multiple unicast sessions is said to be \textit{information-distributive}, if there exist a cumulative and distributive cut-set sequence $\C{W}$, and an extendable path-set sequence $\C{K}$ for $\C{W}$ in the network. 
\end{definition}

As shown in the next theorem, the three features together guarantee that the network is routing-optimal.

\begin{theorem}
\label{thInfodistr}
If a network is information-distributive, it is routing-optimal.
\end{theorem}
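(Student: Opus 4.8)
The plan is to generalize the argument already carried out in Examples~\ref{ex1} and~\ref{ex2} to the full setting of $K$ sessions, using Lemmas~\ref{lemmaDetermine} and~\ref{lemmaInfoDistr} as the two workhorses. Fix an arbitrary rate vector $\B{R}=(R'_i:1\le i\le K)\in \C{R}_{nc}$. By Definition~\ref{defAchievable}, for each $\epsilon=\frac{1}{k}$ ($k\in\BB{Z}_{>0}$) there is, for sufficiently large $n$, a network code satisfying (\ref{eqNCAchieve1})--(\ref{eqNCAchieve3}), and hence (\ref{eqNCAchieveI1})--(\ref{eqNCAchieveI3}). Since the network is information-distributive, fix a cumulative and distributive cut-set sequence $\C{W}=(C_i)^K_{i=1}$ with its permutation sequence $\C{T}=(T_i)^K_{i=1}$, and an extendable path-set sequence $\C{K}=(\C{P}_i)^K_{i=1}$. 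For each $i$, I would label the edges of $C_i$ as $e^i_1,\dots,e^i_{m_i}$ in the order given by $T_i$ (so $m_i=\mincut(s_i,d_i,G_i)$), associate to each $e^i_t$ a distinct path $P_{it}\in\C{P}_i$ passing through $e^i_t$, and define the candidate routing scheme mirroring (\ref{eqEx1Routing}) and the scheme in Example~\ref{ex2}:
\begin{flalign*}
f^{n,k}_i(P)=\begin{cases}
\frac{1}{n} I(Y_i;U_{e^i_t}\mid Y_{1:i-1},U_{T_i(e^i_t)}) & \text{if } P=P_{it};\\
0 & \text{otherwise.}
\end{cases}
\end{flalign*}

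Next I would verify the two routing constraints. For the rate constraint (\ref{eqRoutingCond1}): summing $f^{n,k}_i(P)$ over $P\in\C{P}_{s_id_i}$ telescopes, by the chain-rule decomposition (\ref{eqInfoDistr}), to $\frac{1}{n}I(Y_i;U_{C_i}\mid Y_{1:i-1})$. Lemma~\ref{lemmaDetermine} gives $\frac{1}{n}I(Y_i;U_{C_i}\mid Y_{1:i-1})\ge \frac{1}{n}I(Y_i;U_{\inedge(d_i)}\mid Y_{1:i-1})\ge \frac{1}{n}I(Y_i;U_{\inedge(d_i)})$, and the Fano bound (\ref{eqNCAchieveI3}) then yields the lower bound $(1-\frac1k)(R'_i-\frac1k)-\frac1n$, exactly as in (\ref{eqEx1RoutingRate}) and (\ref{eqEx2RoutingRate}). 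For the capacity constraint (\ref{eqRoutingCond2}) I would fix an arbitrary edge $e\in E$ and bound the total traffic crossing it. The only paths carrying nonzero traffic are the $P_{it}$, so the sum over all sessions of the flow through $e$ equals $\sum_{i,t:\,e\in P_{it}}\frac1n I(Y_i;U_{e^i_t}\mid Y_{1:i-1},U_{T_i(e^i_t)})$. Here the extendability of $\C{K}$ is what makes the argument work: if $P_{it}$ and $P_{jt'}$ both traverse $e$, then they overlap, so $P_{it}\cap C_i=P_{jt'}\cap C_j$, forcing $e^i_t$ and $e^j_{t'}$ to be the same edge of $E$. Thus every path through $e$ that carries traffic is associated with one common cut-edge $e^\ast$, and the total flow through $e$ equals $\sum_{i:\,e^\ast\in C_i} \frac1n I(Y_i;U_{e^\ast}\mid Y_{1:i-1},U_{T_i(e^\ast)})$. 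This is precisely the left-hand side of (\ref{eqTotalDistr}), so Lemma~\ref{lemmaInfoDistr} bounds it by $\frac1n H(U_{e^\ast})\le 1+\frac1k$ via (\ref{eqNCAchieveI1}), matching (\ref{eqEx2RoutingCap}).

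Finally, since each $f^{n,k}_i(P_{it})\le \frac1n H(U_{e^i_t})\le 1+\frac1k$ is uniformly bounded, a diagonal/subsequence argument (as in both examples) extracts a subsequence $(n_l,k_l)^\infty_{l=1}$ along which every $f^{n_l,k_l}_i(P)$ converges to a finite limit $f_i(P)$. Taking limits in the two verified inequalities shows $f_i(P)$ satisfies (\ref{eqRoutingCond1}) with the right-hand side $R'_i$ and (\ref{eqRoutingCond2}) with bound $1$, so $\B{R}\in\C{R}_r$. As $\B{R}$ was arbitrary, $\C{R}_{nc}\subseteq\C{R}_r$, and combined with the reverse inclusion $\C{R}_r\subseteq\C{R}_{nc}$ this gives $\C{R}_{nc}=\C{R}_r$, i.e.\ the network is routing-optimal.

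The step I expect to be the main obstacle is the capacity-constraint verification for edges $e$ that are \emph{not} themselves in any cut-set (the edges lying strictly along the paths, such as $e_4$ in Example~\ref{ex2}). The whole reduction to Lemma~\ref{lemmaInfoDistr} hinges on showing that all traffic-bearing paths crossing such an $e$ funnel through a single common cut-edge $e^\ast$; this is exactly where extendability must be invoked carefully, and I would need to confirm that the pairwise conclusion $P_{it}\cap C_i=P_{jt'}\cap C_j$ indeed forces a single shared edge rather than merely a shared cut-set intersection, so that the sum collapses to the form (\ref{eqTotalDistr}) with one fixed $U_{e^\ast}$. A secondary subtlety is making the simultaneous convergence over all finitely many pairs $(i,P)$ rigorous via a single nested subsequence, but that is routine compactness once the uniform bound is in hand.
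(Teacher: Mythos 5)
Your proposal is correct and takes essentially the same route as the paper's proof: the identical per-cut-edge routing scheme $f^{n,k}_i$, the chain rule (\ref{eqInfoDistr}) plus Lemma \ref{lemmaDetermine} for the rate constraint, extendability funneling all traffic through the representative cut-edge so that Lemma \ref{lemmaInfoDistr} bounds the load on every edge, and the same bounded-subsequence limit to extract the final routing scheme. The only (harmless) variation is at the Fano step, where the paper re-derives a conditional Fano bound via a decoding-error term $\delta'_i \le \delta_i$, while you drop the conditioning on $Y_{1:i-1}$ using independence of the sources and reuse (\ref{eqNCAchieveI3}) directly; both arguments are valid.
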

\begin{proof}
See Appendix \ref{appInfoDistr}.
\end{proof}

\begin{example}
Consider the network shown in Fig. \ref{figEx0ThreeUnicast}.
Define the following cut-sets:
\begin{flalign*}
& C_1=\{(s_,v_1), (v_2,v_3), (v_4,v_5)\} \\
& C_2=\{(v_2,v_3), (v_4,v_5)\} \\
& C_3=\{(v_6,v_7), (s_3,d_3)\}
\end{flalign*}
Define $\C{W}=(C_i)^3_{i=1}$.
Define the following paths: 
\begin{flalign*}
& P_{11}=\{(s_1,v_1), (v_1,d_1)\} \\
& P_{12}=\{(s_1,v_2), (v_2,v_3), (v_3,d_1)\} \\
& P_{13}=\{(s_1,v_4), (v_4,v_5), (v_5,d_1)\} \\
& P_{21}=\{(s_2,v_2), (v_2,v_3), (v_3,d_2)\} \\
& P_{31}=\{(s_3,v_6), (v_6,v_7), (v_7,d_3)\} \\
& P_{33}=\{(s_3,d_3)\}
\end{flalign*}
Define $\C{K}=\{\{P_{11},P_{12},P_{13}\}$,$\{P_{21},P_{22}\}$,$\{P_{31},P_{32}\}\}$.
It can be verified that $\C{W}$ is cumulative and distributive, and $\C{K}$ is extendable.
The network is information-distributive. \myqed
\end{example}

\section{More Examples \label{secExample}}

\subsection{Index Coding}

We consider a multiple-unicast version of index coding problem \cite{index_code}.
In this problem, there are $K$ terminals $t_1,\cdots,t_K$, a broadcast station $s$, and $K$ source messages $X_1,\cdots,X_K$, all available at $s$.
All $X_i$'s are mutually independent random variables uniformly distributed over alphabet $\C{X}_i=\{1,\cdots,2^m\}$.
Each terminal requires $X_i$, and has acquired a subset of source messages $\C{H}_i$ such that $X_i\notin \C{H}_i$.
$s$ uses an encoding function $\phi: \prod^K_{i=1} \C{X}_i \rightarrow \{1,\cdots,2^l\}$ to encode the source messages, and broadcasts the encoded message to the terminals through an error-free broadcast channel.
Each $t_i$ uses a decoding function $\psi_i$ to decode $X_i$ by using the received message and the messages in $\C{H}_i$.
The encoding function $\phi$ and the decoding functions $\psi_i$'s are collectively called an index code, and $l$ is the length of this index code.
The minimum length of an index code is denoted by $l_{min}$.

This index coding problem can be cast to a multiple-unicast network coding problem over a network $G_1=(V_1,E_1)$, where $V_1=\{s_i,d_i: 1\le i \le K\} \cup \{u,v\}$, $E_1 = \{(s_i,u),(v,d_i): 1\le i\le K\} \cup \{(u,v)\} \cup \{(s_j,d_i): X_j \in \C{H}_i\}$.
The $K$ unicast sessions are $(s_1,d_1),\cdots,(s_K,d_K)$.
It can be verified that there exists an index code of length $l$, if and only if $\B{R}=(\frac{l}{m}, \cdots, \frac{l}{m})$ is achievable by network coding in $G_1$.

Let $C_i=\{(u,v)\}$, $P_i=\{(s_i,u),(u,v),(v,d_i)\}$.
Define $\C{W}=(C_i)^K_{i=1}$ and $\C{K}=(\C{P}_i)^K_{i=1}$, where $\C{P}_i=\{P_i\}$.
Since each $C_i$ contains only one edge, $\C{W}$ is distributive.
Meanwhile, since all $P_i$'s overlap at $(u,v)$, $\C{K}$ is extendable.

The following theorem states that if the optimal solution to the index coding problem is to let the broadcast station transmit raw packet, \ie no coding is needed, then the corresponding multiple-unicast network is information-distributive, and the converse is also true.

\begin{theorem}
\label{thIndexCode}
$l_{min} = mK$ if and only if $\C{W}$ is cumulative, \ie $G_1$ is information-distributive.
\end{theorem}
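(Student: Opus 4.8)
The plan is to first translate the topological condition ``$\C{W}$ is cumulative'' into a purely combinatorial statement about side information, and then reduce the theorem to a coding bound. Introduce the \emph{side-information digraph} $D$ on vertex set $\{1,\dots,K\}$, with an arc $i\to j$ precisely when $X_j\in\C{H}_i$; by construction of $G_1$ this is exactly when the edge $(s_j,d_i)$ is present. The key observation is that in $G_1$ every sink has no outgoing edge, so from $s_j$ one can only reach $d_i$ either through the bottleneck $(u,v)=C_i$ (via $s_j\to u\to v\to d_i$) or through the single direct edge $(s_j,d_i)$, if it exists. Hence, for a fixed indexing, $\C{W}$ is cumulative iff for all $i<j$ there is no edge $(s_j,d_i)$, i.e. $X_j\notin\C{H}_i$, i.e. $\C{H}_i\subseteq\{X_1,\dots,X_{i-1}\}$ (recall $X_i\notin\C{H}_i$). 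Since the cut-sets $C_i=\{(u,v)\}$ are already distributive and the paths $\C{P}_i$ already extendable (as noted before the theorem), ``$G_1$ is information-distributive'' means only that \emph{some} ordering of the sessions makes $\C{W}$ cumulative. Reading off $D$, such an ordering makes every arc point from a larger to a smaller index, which is possible iff $D$ is acyclic. So the theorem is equivalent to: $l_{min}=mK$ iff $D$ is acyclic.

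\textbf{Reduction.} The upper bound $l_{min}\le mK$ is immediate: broadcasting $X_1,\dots,X_K$ uncoded lets every terminal read off its message. It therefore remains to prove the two implications: acyclicity of $D$ forces $l_{min}\ge mK$, and a cycle in $D$ yields an index code of length strictly below $mK$.

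\textbf{Acyclic $\Rightarrow l_{min}=mK$ (the converse/lower bound).} Reorder the sessions so that $\C{W}$ is cumulative, i.e. $\C{H}_i\subseteq\{X_1,\dots,X_{i-1}\}$ for every $i$. This is exactly the situation of Lemma \ref{lemmaDetermine} specialized to $C_i=\{(u,v)\}$. Concretely, for any valid index code with broadcast $W$, the exact decoder gives $X_i=\psi_i(W,\C{H}_i)$ with $\C{H}_i\subseteq\{X_1,\dots,X_{i-1}\}$, so $H(X_i\mid W,X_1,\dots,X_{i-1})=0$; by the chain rule $H(X_1,\dots,X_K\mid W)=0$, whence $l\ge H(W)\ge I(X_1,\dots,X_K;W)=H(X_1,\dots,X_K)=mK$. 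Thus $l_{min}=mK$. Conceptually this is precisely why cumulativity gives routing-optimality: by Theorem \ref{thInfodistr} the (now information-distributive) $G_1$ is routing-optimal, and since the unique $s_i$--$d_i$ path is forced through the unit-capacity edge $(u,v)$, the routing region admits only $\sum_i R'_i\le 1$, which under the stated length--rate correspondence is the same bound $l\ge mK$.

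\textbf{Cyclic $\Rightarrow l_{min}<mK$ (the forward direction, contrapositive).} If $D$ has a directed cycle, take a shortest one, $i_1\to i_2\to\cdots\to i_t\to i_1$ with $t\ge 2$, so $X_{i_{r+1}}\in\C{H}_{i_r}$ for each $r$ (indices cyclic). Regard each message as an element of $\BB{F}_2^m$ and broadcast the $t-1$ coded symbols $Z_r=X_{i_r}\oplus X_{i_{r+1}}$ for $1\le r\le t-1$, together with the remaining $K-t$ messages uncoded. Terminal $i_r$ ($r<t$) recovers $X_{i_r}=Z_r\oplus X_{i_{r+1}}$ using its side message; terminal $i_t$ recovers $X_{i_t}=X_{i_1}\oplus\bigoplus_{r=1}^{t-1}Z_r$ by telescoping, again using its side message $X_{i_1}$. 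The total length is $(t-1)m+(K-t)m=(K-1)m<mK$, so $l_{min}<mK$.

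\textbf{Main obstacle.} The routine part is the cycle-coding construction. The crux is the converse in the acyclic case: one must show that acyclicity genuinely forbids \emph{any} coding gain, which rests on choosing the correct (topological) ordering so that the chain-rule / Fano-type argument of Lemma \ref{lemmaDetermine} collapses $H(X_{1:K}\mid W)$ to zero. Equally delicate is the bookkeeping in the first paragraph: correctly equating ``cumulative'' with acyclicity of $D$ while accounting for the freedom to relabel the sessions, and confirming that distributivity and extendability are automatic so that cumulativity is the only remaining requirement for information-distributivity.
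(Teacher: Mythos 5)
Your proof is correct, but it takes a genuinely different---and more self-contained---route than the paper in both directions. For the direction ``cumulative $\Rightarrow l_{min}=mK$'', the paper does not use a direct entropy argument at all: it invokes its main result (Theorem~\ref{thInfodistr}) to conclude that $G_1$ is routing-optimal, observes that every $s_i$--$d_i$ path is forced through the unit-capacity edge $(u,v)$ so that routing (hence network coding) supports a common rate of at most $1/K$, and then translates this into $l_{min}=mK$ via the stated index-code/network-code correspondence. Your chain-rule argument ($\C{H}_i\subseteq\{X_1,\dots,X_{i-1}\}$ gives $H(X_i\mid W,X_{1:i-1})=0$, hence $l\ge H(X_{1:K})=mK$) bypasses Theorem~\ref{thInfodistr} entirely, and is cleaner in that it exploits the exactness of index-coding decoders rather than any asymptotic Fano-type bound. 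For the converse direction ``$l_{min}=mK\Rightarrow$ cumulative'', the paper simply cites the known result of \cite{index_code} that $l_{min}=mK$ forces the side-information graph to be acyclic, and then re-indexes the sessions; you instead prove this fact from scratch by exhibiting, for any directed cycle, the XOR-along-the-cycle index code of length $(K-1)m<mK$. Both treatments handle the relabeling issue in the same way (cumulativity under some session ordering is equivalent to acyclicity of the side-information digraph, with distributivity and extendability automatic because each cut-set is the single edge $(u,v)$), and your bookkeeping there is sound. What the paper's route buys is brevity and a demonstration of its framework in action; what yours buys is a fully self-contained, elementary proof that depends neither on Theorem~\ref{thInfodistr} nor on the external index-coding result.
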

\begin{proof}
See Appendix \ref{appProofEx}.
\end{proof}

\begin{figure}
\centering
\includegraphics[scale=.4]{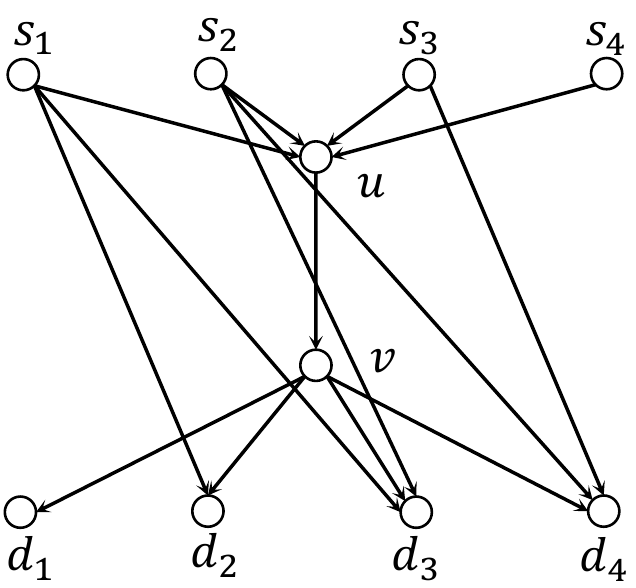}
\caption{The equivalent network coding problem for an index coding problem. The network is information-distributive, and thus no coding is needed in the index coding problem. \label{figIndexCode}}
\end{figure}

\begin{example}
\label{exIndexCode}
In Fig. \ref{figIndexCode}, we show an example of $G_1$, which corresponds to an index coding problem defined by: $\C{H}_1=\emptyset$, $\C{H}_2=\{X_1\}$, $\C{H}_3=\{X_1,X_2\}$, and $\C{H}_4=\{X_2,X_3\}$.
Clearly, $\C{W}$ is cumulative, and thus $l_{min}=mK$. \myqed
\end{example}

\subsection{Single Unicast with Hard Deadline Constraint}

In this example, we consider the network coding problem for a single-unicast session $(s,d)$ over a network $G=(V,E)$, where each edge $e$ is associated with a delay $d_e \in \BB{Z}_{>0}$, and each node has a memory to hold received data.
Given a directed path $P$, let $d(P)=\sum_{e\in P} d_e$ denote its delay.
For $e\in E$, let $\delta(e)$ denote the minimum delay of directed paths from $s$ to $\tail(e)$.
The data transmission in the network proceeds in time slots.
The messages transmitted from $s$ is represented by a sequence $(Y[t])^K_{t=0}$, where $Y[t]$ is a uniformly distributed random variable, and represents the message transmitted from $s$ at time slot $t$.
All $Y[t]$'s are mutually independent.
We require that each $Y[t]$ must be received by $d$ within $\tau$ time slots. 
Otherwise, it is regarded as useless, and is discarded.
This problem was first proposed by \cite{kodialam2002allocating}\cite{minghua_unpublished}.
Recently, it has been shown that network coding can improve throughput by utilizing over-delayed information \cite{delay_constrained}.

This problem can be cast to an equivalent network coding problem for multiple unicast sessions.
We construct a time-extended graph $\tilde{G} = (\tilde{V}, \tilde{E})$ as follows:
the node set is $\tilde{V}=\{s_t,d_t:0\le t \le K\} \cup \{v[t]: 0 \le t \le K+\tau\}$;
for each $e=(u,v)\in E$ and $0\le t \le K+\tau-d_e$, we add an edge $e[t]=(u[t],v[t+d_e])$ to $\tilde{E}$;
for $u\in V$, and $0\le t \le K+\tau-1$, we add $M$ edges from $u[t]$ to $u[t+1]$, where $M$ is the amount of memory available at $u$;
for each $0\le t \le K$, we add $J$ edges from $s_t$ to $s[t]$ and $J$ edges from $d[t+\tau]$ to $d_t$, where $J$ is a sufficiently large integer.
Thus, the original single unicast session $(s,d)$ is cast to $K+1$ unicast sessions $(s_0,d_0),\cdots,(s_K,d_K)$ over $\tilde{G}$.

Let $\tilde{G}[t]$ denote the routing domain for $(s_t,d_t)$, and $m=\mincut(s_0,d_0,\tilde{G}[0])$.
It can be seen that each $\tilde{G}[t]$ is simply a time-shifted version of $\tilde{G}[0]$.
Given a subset of edges $U\subseteq \tilde{E}$, define $U[t]=\{(u[k+t],v[l+t]): (u[k],v[l])\in U\}$.
Let $C=\{e_j[t_j]:1\le j\le m\}$ be a cut-set between $s_0$ and $d_0$ such that $e_j\in E$ for $1\le j \le m$, and  $\C{P}=\{P_j,\cdots,P_m\}$ a set of edge disjoint paths from $s_0$ to $d_0$ such that $e_j[t_j]\in P_j$ for $1\le j \le m$.
Let $\C{P}[t]=\{P[t]: P\in \C{P}\}$.
We consider the cut-set sequence $\C{W}=(C[t])^K_{t=0}$, and the path-set sequence $\C{K}=(\C{P}[t])^K_{t=0}$.

\begin{lemma}
\label{lemmaDelayCumulative}
$\C{W}$ is cumulative.
\end{lemma}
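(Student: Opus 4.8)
The plan is to verify Definition~\ref{defCumulative} directly: for every pair $0\le i<j\le K$ I will show that each directed path from $s_j$ to $d_i$ in $\tilde{G}$ contains an edge of $C_i = C[i]$. My first step is to record that $C[i]$ is a genuine cut between $s_i$ and $d_i$ in all of $\tilde{G}$, not merely inside the routing domain. Since the construction gives that $\tilde{G}[i]$ is a time-shift of $\tilde{G}[0]$, the map $v[k]\mapsto v[k+i]$, $s_t\mapsto s_{t+i}$, $d_t\mapsto d_{t+i}$ is an isomorphism of $\tilde{G}[0]$ onto $\tilde{G}[i]$ carrying $C$ to $C[i]$; hence $C[i]$ separates $s_i$ from $d_i$ within $\tilde{G}[i]$. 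Because $\tilde{G}[i]$ is the routing domain of $(s_i,d_i)$, every $s_i$--$d_i$ path of $\tilde{G}$ lies entirely in $\tilde{G}[i]$ (each of its edges sits on an $s_i$--$d_i$ path, namely the path itself), so every $s_i$--$d_i$ path of $\tilde{G}$ must cross $C[i]$.

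Next I would reduce an arbitrary $s_j$--$d_i$ path to an $s_i$--$d_i$ path. Fix $i<j$ and a path $Q$ from $s_j$ to $d_i$ (if none exists, the condition is vacuous). By construction $\outedge(s_j)$ consists only of the $J$ parallel edges $s_j\to s[j]$ and $\inedge(d_i)$ only of the $J$ parallel edges $d[i+\tau]\to d_i$, so $Q$ has the form $s_j\to s[j]\xrightarrow{Q'} d[i+\tau]\to d_i$, where $Q'$ runs from $s[j]$ to $d[i+\tau]$. Using the memory edges at node $s$ (which exist since each node carries memory, $M\ge 1$), I prepend the walk $R:\; s[i]\to s[i+1]\to\cdots\to s[j]$, legitimate because all time indices lie in $0\le t\le K+\tau-1$. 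Then $s_i\to s[i]\xrightarrow{R}s[j]\xrightarrow{Q'}d[i+\tau]\to d_i$ is a path from $s_i$ to $d_i$ in $\tilde{G}$, and by the first step it crosses $C[i]$.

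Finally I would argue that the crossing edge lies on $Q'$, hence on $Q$. The edges of $C[i]=\{e_\ell[t_\ell+i]\}$ are time-shifted transmission edges, each arising from an original edge $e_\ell\in E$; the prepended edges of $R$ are memory edges $s[t]\to s[t+1]$, and the two connector edges are of the source and sink types $s_\bullet\to s[\bullet]$ and $d[\bullet]\to d_\bullet$. In the multigraph $\tilde{G}$ these are all distinct from transmission edges, so none of them belongs to $C[i]$. Consequently the crossing must occur on $Q'\subseteq Q$, which exhibits an edge of $C[i]=C_i$ on $Q$, as required; since $i<j$ were arbitrary, $\C{W}$ is cumulative.

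The crux, and the step I expect to be the main obstacle, is this reduction: a path from $s_j$ to $d_i$ with $j>i$ carries a strictly smaller time budget $i+\tau-j<\tau$ than an $s_i$--$d_i$ path, so it is \emph{not} itself an $s_i$--$d_i$ path and the cut property cannot be invoked directly. The memory edges at the source resolve this by realizing a ``late'' injection as an on-time injection that idles at $s$, while the fact that $C[i]$ contains only transmission edges ensures the padding creates no spurious crossing. I would also dispatch the degenerate case $j>i+\tau$ separately: there $s[j]$ cannot reach $d[i+\tau]$ along forward-in-time edges at all, so no $s_j$--$d_i$ path exists and cumulativeness holds vacuously.
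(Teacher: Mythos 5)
Your proof is correct and follows essentially the same route as the paper's: both convert an $s_j$--$d_i$ path into an $s_i$--$d_i$ path by prepending the connector edge $(s_i,s[i])$ and the memory edges $s[i]\to s[i+1]\to\cdots\to s[j]$, invoke the fact that $C[i]$ is a cut-set for $(s_i,d_i)$, and conclude the crossing edge lies on the original path. The only difference is that you make explicit two points the paper leaves implicit --- that $C[i]$ separates $s_i$ from $d_i$ in all of $\tilde{G}$ (via the time-shift of $\tilde{G}[0]$ and the routing-domain property), and that the crossing edge cannot be a prepended memory or connector edge since $C[i]$ consists only of time-shifted edges of $E$ --- which strengthens rather than changes the argument.
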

\begin{proof}
See Appendix \ref{appProofEx}.
\end{proof}

Given $U\subseteq \tilde{E}$, a \textit{recurrent} sequence of $U$ is a sequence consisting of all the edges in $U$ that are time-shifted versions of the same edge.
$C[0]$ is said to be \textit{distributive} if there is a re-indexing of the edges in $C[0]$ such that for each recurrent sequence $(e_p[t_{n_j}])^k_{j=1}$ of $C[0]$, the following conditions are satisfied:
\begin{enumerate}
\item for each $1< j\le k$, if $e_q[t_q]\in C[0]$ lies before $e_p[t_{n_j}]$, and $e_q[t_q-t_{n_j}+t_{n_{j-1}}] \notin C[0]$, then $t_q - \delta(e_q) \le t_{n_j} - t_{n_{j-1}} - 1$;

\item for each $1\le j< k$, if $e_q[t_q]\in C[0]$ lies before $e_p[t_{n_j}]$, and $e_q[t_q+t_{n_{j+1}}-t_{n_j}] \notin C[0]$, then $t_q - \delta(e_q) \le t_{n_j} - t_{n_1}$.
\end{enumerate}
$\C{P}$ is said to be \textit{extendable} if for all $P_i,P_j\in \C{P}$ and $e[k], e[l]\in \tilde{E}$ such that $e[k]\in P_i$ and $e[l]\in P_j$, $e_i=e_j$ and $t_i-t_j=k-l$.

\begin{theorem}
\label{thDelayInfoDistributive}
If $C[0]$ is distributive, and $\C{P}$ is extendable, $\tilde{G}$ is information-distributive, and thus is routing-optimal.
\end{theorem}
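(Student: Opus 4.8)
```latex
The plan is to show that the abstract definitions (cumulative, distributive, extendable) from Section~III are satisfied by the concrete objects $\C{W}=(C[t])^K_{t=0}$, $\C{K}=(\C{P}[t])^K_{t=0}$, and then invoke Theorem~\ref{thInfodistr} directly. Cumulativity of $\C{W}$ is already granted by Lemma~\ref{lemmaDelayCumulative}, so the two remaining tasks are to verify that $\C{W}$ is \emph{distributive} in the sense of Definition~\ref{defDistributive}, and that $\C{K}$ is \emph{extendable} in the sense of Definition~\ref{defExtendable}, using the delay-specific hypotheses that $C[0]$ is distributive and $\C{P}$ is extendable.

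First I would set up the correspondence between the time-extended notions and the general ones. For a fixed edge $e[k]\in\tilde E$, note that $\tail(e[k])=u[k]$ is reachable from a source $s_t$ precisely when the earliest arrival $t+\delta(e)\le k$, i.e.\ when $t\le k-\delta(e)$; hence $\alpha(e[k])$, the largest source index connected to $\tail(e[k])$, equals $\min\{K,\,k-\delta(e)\}$. This translates the quantities $t_q-\delta(e_q)$ appearing in the delay-distributive conditions into the general $\alpha(\cdot)$ used in~(\ref{eqDistr1})--(\ref{eqDistr2}). Next I would fix an edge $e=e_p[t_{n}]$ lying in several cut-sets of $\C{W}$; because each $C[t]$ is a time-shift of $C[0]$, the set $\C{W}(e)=\{C[t]:e\in C[t]\}$ corresponds exactly to the recurrent sequence through $e$ in $C[0]$, and the permutation $\C{T}$ is induced by the re-indexing of $C[0]$ that witnesses delay-distributivity, shifted in time. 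With this dictionary, conditions~(1) and~(2) of delay-distributivity become, edge by edge, the inequalities~(\ref{eqDistr2}) and~(\ref{eqDistr1}) respectively: an edge $e_q[t_q]$ appearing in one shifted cut-set but not the neighbouring one is exactly an element of the set difference $T_{n_{j}}(e)-T_{n_{j+1}}(e)$ or $T_{n_{j+1}}(e)-T_{n_{j}}(e)$, and the bound $t_q-\delta(e_q)\le t_{n_j}-t_{n_{j-1}}-1$ matches $\alpha(e_q[t_q])\le n_{j+1}-1$ after the time-shift is absorbed into the indices.

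For extendability, I would argue that the delay-specific condition is simply the general one specialized to the highly regular structure of $\tilde G$. If $P_1\in\C{P}[i]$ overlaps $P_2\in\C{P}[j]$, then each shares some edge $e[k]\in P_1$, $e[l]\in P_2$; the hypothesis that $\C{P}$ is extendable forces $e_i=e_j$ and $t_i-t_j=k-l$, which says the two paths cross the same original cut-edge with a consistent time offset, and therefore $P_1\cap C[i]=P_2\cap C[j]$ as required by Definition~\ref{defExtendable}. The translation here is essentially bookkeeping on the time-shift indices.

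The main obstacle I expect is the distributivity verification, specifically keeping the index arithmetic consistent across three different time frames: the reference cut $C[0]$, the shifted cut $C[t_{n_j}]$ in which $e$ actually sits, and the neighbouring shifted cut $C[t_{n_{j-1}}]$ or $C[t_{n_{j+1}}]$. One must check that ``$e_q[t_q]$ lies before $e_p[t_{n_j}]$ in $C[0]$'' corresponds correctly to membership of $e_q$ in the prefix $T(e)$ under the shifted permutation, and that the off-by-one in condition~(1) (the $-1$) aligns with the strict inequality $n_{j+1}-1$ in~(\ref{eqDistr2}) rather than~(\ref{eqDistr1}). Once the dictionary $\alpha(e[k])=\min\{K,k-\delta(e)\}$ and the shift $C[t]=C[0]+t$ are fixed, each condition should reduce to an identity between the two formulations; the remaining work is then a routine appeal to Theorem~\ref{thInfodistr}.
```
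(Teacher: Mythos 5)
Your proposal takes essentially the same route as the paper: the paper proves exactly the three facts you outline as separate lemmas --- cumulativity (Lemma~\ref{lemmaDelayCumulative}), distributivity of $\C{W}$ via the time-shifted permutation sequence $T[t]=(e_i[t_i+t])_i$ and the identity $\alpha(e[t])=t-\delta(e)$ (Lemma~\ref{lemmaDelayDistributive}), and extendability of $\C{K}$ by time-shift bookkeeping (Lemma~\ref{lemmaDelayExtendable}) --- and then invokes Theorem~\ref{thInfodistr}. Your dictionary is the correct one, including the mapping of delay-conditions (1)/(2) onto (\ref{eqDistr2})/(\ref{eqDistr1}) respectively and the alignment of the off-by-one term; the only difference is your refinement $\alpha(e[k])=\min\{K,\,k-\delta(e)\}$, which is harmless since it can only tighten the required bounds.
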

\begin{proof}
See Appendix \ref{appProofEx}.
\end{proof}

\begin{figure}
\centering
\subfloat[Original network\label{figEx3Net}]{\includegraphics[scale=.4]{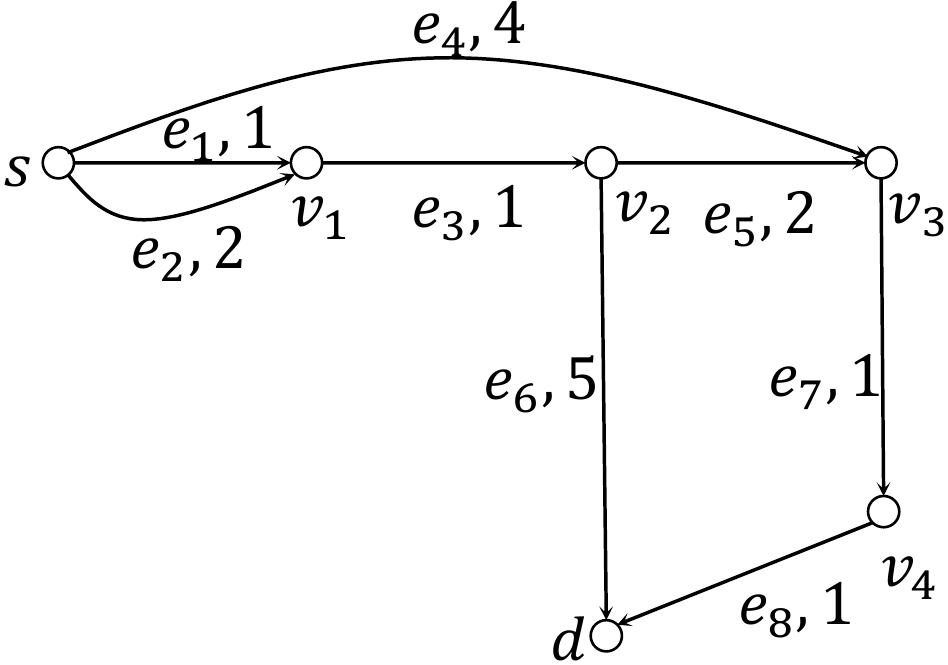}} 
\subfloat[Routing domain for $(s_0,d_0)$\label{figEx3RoutingDomain}]{\includegraphics[scale=.4]{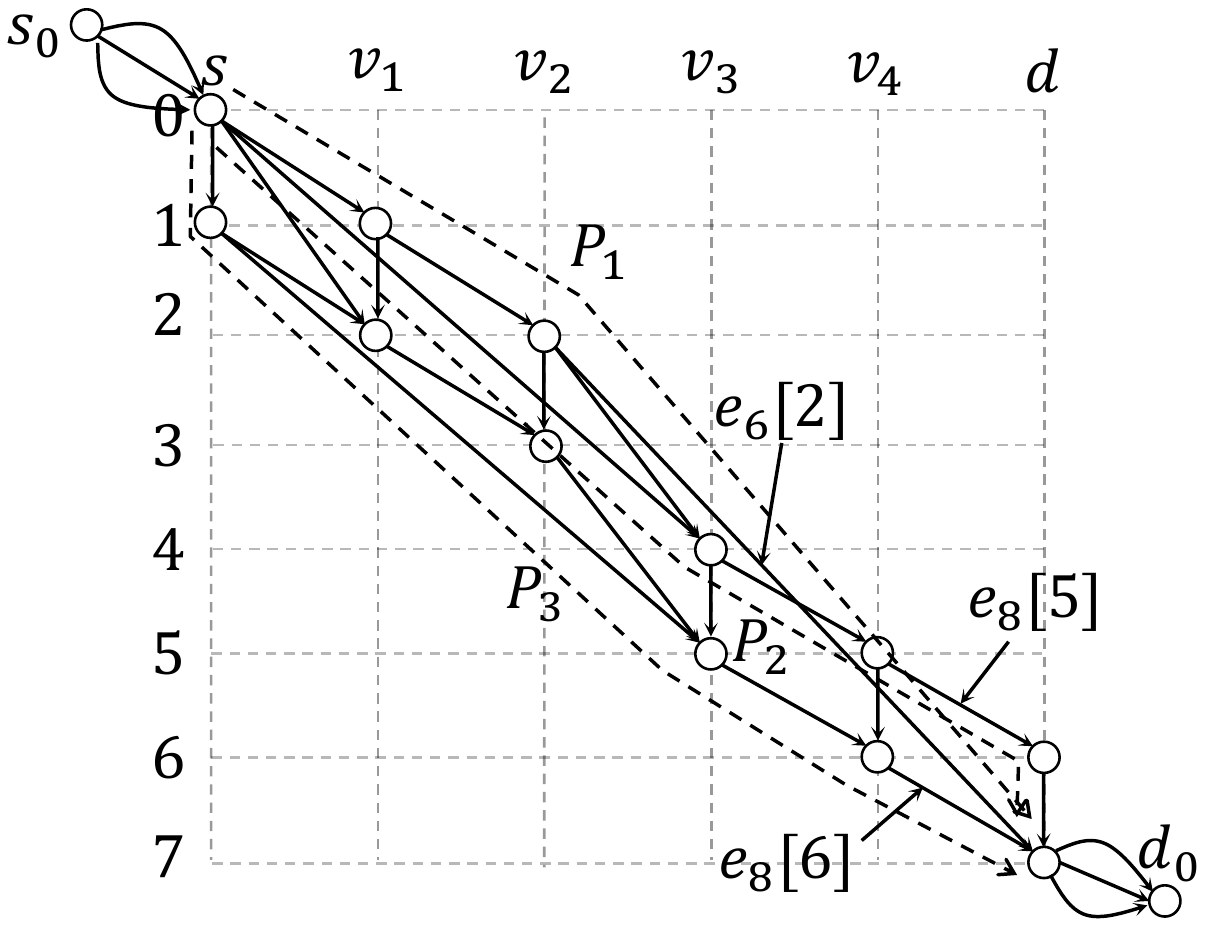}}
\caption{An example of single unicast with deadline constraint $\tau=7$.
(a) shows an network with a single unicast $(s,d)$, where $e_k,i$ denotes the alias of an edge and its corresponding delay respectively.
(b) shows the routing-domain between $s_0$ and $d_0$ over the corresponding time-extended graph $\tilde{G}$, where the node at coordinate $(v,t)$ is $v[t]$.
In this routing-domain, $C[0]=\{e_8[5],e_6[2],e_8[6]\}$ is distributive, and $\C{P}=\{P_1,P_2,P_3\}$ is extendable.
Hence, $\tilde{G}$ is information-distributive, and therefore, routing-optimal.}
\end{figure}

\begin{example}
\label{exNCDelay}
In Fig. \ref{figEx3Net}, we show an example of single unicast with delay constraint $\tau=7$.
In Fig. \ref{figEx3RoutingDomain}, we show the routing domain $\tilde{G}[0]$ for $(s_0,d_0)$.
Let $C[0]=\{e_8[5],e_6[2],e_8[6]\}$, and $\C{P}=\{P_1,P_2,P_3\}$, where $P_1,P_2,P_3$ are marked as black dashed lines in Fig. \ref{figEx3RoutingDomain}.
It can be verified that $C[0]$ is distributive, and $\C{P}$ is extendable.
Thus, according to Theorem \ref{thDelayInfoDistributive}, $\tilde{G}$ is information-distributive. \myqed
\end{example}

\section{The Converse is Not True \label{secBeyond}}

Note that information-distributive networks don't subsume all possible routing-optimal networks.
In the following, we show an example of such a network.

\begin{example}
Consider the network as shown in Fig. \ref{figBeyond}.
We first show that it is not information-distributive.
Define the following paths:
\begin{flalign*}
& P_{11} = \{a_1,e_1,b_1\}, P_{12} = \{a_2, e_3, b_2\} \\
& P_{21} = \{a_3,e_3,b_3\}, P_{22} = \{a_4, e_5, b_4\} \\
& P_{31} = \{a_5,e_5,b_5\}, P_{32} = \{a_6, e_6, b_6\}
\end{flalign*}
For $1\le i \le 3$, let $\C{P}_i=\{P_{i1},P_{i2}\}$, and $\C{K}=(\C{P}_i)^3_{i=1}$.
Since each source has only two outgoing edges, $\C{K}$ is the only-possible path-set sequence.
It can be verified that for all cumulative and distributive cut-set sequences, $\C{K}$ is not extendable.
For instance, let $C_1=\{a_1,e_3\}$, $C_2=\{e_3,b_4\}$, and $C_3=\{e_5,b_6\}$.
Clearly, the cut-set sequence $\C{W}=(C_i)^3_{i=1}$ is cumulative and distributive.
However, it can be seen that $P_{22}$ overlaps with $P_{31}$, but $P_{22} \cap C_2 = \{b_4\}$, and $P_{31} \cap C_3 = \{e_5\}$.
Hence, $\C{K}$ doesn't satisfy the condition of Definition \ref{defExtendable}.
Similarly, we can verify other cases.
Thus, the network is not information-distributive.

\begin{figure}
\centering
\includegraphics[scale=.6]{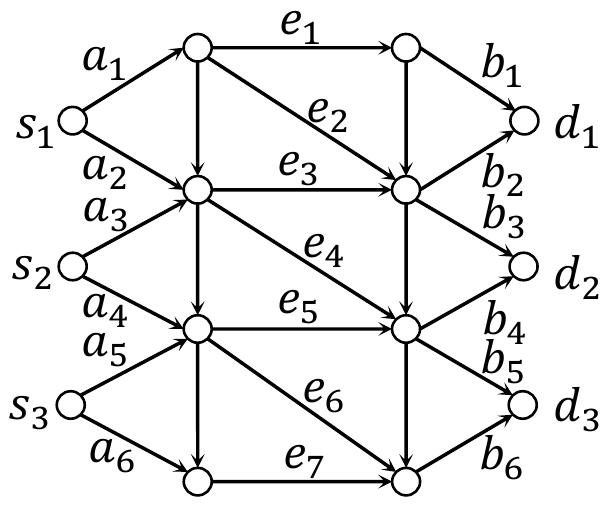}
\caption{A routing-optimal network that is not information-distributive. \label{figBeyond}}
\end{figure}

Nevertheless, we can show that the network is routing-optimal.
Consider an arbitrary rate vector $\B{R}=(R'_1,R'_2,R'_3)\in \C{R}_{nc}$.
For $\epsilon=\frac{1}{k}$ ($k\ge 2$), there exists a network code of length $n$ such that (\ref{eqNCAchieveI1})-(\ref{eqNCAchieveI3}) are satisfied.

Define the following cut-sets, and permutations of edges:
\begin{flalign*}
& C_1 = \{e_1,e_2,e_3\}\; C_2=\{e_3,e_4,e_5\}\; C_3=\{e_5,e_6,e_7\} \\
& T_1 = (e_1,e_2,e_3)\hspace{6pt} T_2=(e_3,e_4,e_5)\hspace{6pt} T_3=(e_5,e_6,e_7)
\end{flalign*}
Define the following permutations:
\begin{flalign*}
T'_1 = (b_1,b_2) \; T'_2 = (b_3,b_4) \; T'_3 = (b_5,b_6)
\end{flalign*}
Let $\C{W}=(C_i)^3_{i=1}$, and $\C{T}=(T_i)^3_{i=1}$.
Clearly, $\C{W}$ satisfies the condition of Definition \ref{defCumulative}.
Thus, according to Lemma \ref{lemmaDetermine}, for $i=1,2,3$, the following inequality holds:
\begin{flalign}
\label{eqEx41}
I(Y_i; U_{\inedge(d_i)} | Y_{1:i-1}) \le I(U_{C_i}; Y_i | Y_{1:i-1})
\end{flalign}
Moreover, since $\C{T}$ satisfies the conditions of Definition \ref{defDistributive}.
By Lemma \ref{lemmaInfoDistr}, for $e\in \bigcup^3_{i=1}C_i$, the following inequality holds:
\begin{flalign}
\label{eqEx42}
\sum^3_{i=1} \sum_{e\in C_i} I(Y_i; U_e | Y_{1:i-1}, U_{T_i(e)}) \le H(U_e)
\end{flalign}

Define the following paths:
\begin{flalign*}
& P_{11} = \{a_1,e_1,b_1\}, P_{12} = \{a_1,e_2,b_2\}, P_{13} = \{a_2,e_3,b_2\} \\
& P_{21} = \{a_3,e_3,b_3\}, P_{22} = \{a_3,e_4,b_4\}, P_{23} = \{a_4,e_5,b_4\} \\
& P_{31} = \{a_5,e_5,b_5\}, P_{32} = \{a_5,e_6,b_6\}, P_{33} = \{a_6,e_7,b_6\}
\end{flalign*}
Let $\C{P}_i=\{P_{i1},P_{i2},P_{i3}\}$.
Define a routing scheme as follows:
\begin{flalign*}
f^{n,k}_i(P) = \begin{cases}
\frac{1}{n} I(Y_i; U_{P\cap C_i} | Y_{1:i-1}, U_{T_i(e)}) & \text{if } P\in \C{P}_i \\
0 & \text{otherwise.}
\end{cases}
\end{flalign*}

Note the following inequalities hold for $i=1,2,3$:
\begin{flalign}
\begin{split}
\label{eqEx44}
& \frac{1}{n}H(Y_i) \ge \sum_{P\in \C{P}_i} f^{n,k}_i(P) \\
=& \frac{1}{n} I(Y_i; U_{C_i} | Y_{1:i-1}) \overset{(a)}{\ge}  \frac{1}{n} I(Y_i; U_{\inedge(d_i)} | Y_{1:i-1}) \\
=& \frac{1}{n}\sum_{e\in \inedge(d_i)} I(Y_i; U_e | Y_{1:i-1}, U_{T'_i(e)}) \\ \overset{(b)}{\ge}& \frac{1}{n}(1-\frac{1}{k})H(Y_i) - \frac{1}{n} \ge (1-\frac{1}{k})(R'_2-\frac{1}{k}) - \frac{1}{n}
\end{split}
\end{flalign}
where $(a)$ holds because $U_{\inedge(d_i)}$ is a function of $U_{C_2},Y_{1:i-1}$;
$(b)$ is due to Fano's Inequality.
For $i=1,2,3$, $e'\in \bigcup^3_{i=1}C_i$, and $e\in \inedge(d_i)$, define the following notations:
\begin{flalign*}
&y^{n,k}_i = \frac{1}{n}H(Y_i) \quad u^{n,k}_{e'} = \frac{1}{n}H(U_{e'}) \\
&g^{n,k}_{i,e} = \frac{1}{n}I(Y_i; U_e | Y_{1:i-1}, U_{T'_i(e)})
\end{flalign*}
Thus, (\ref{eqEx44}) can be rewritten in a concise form as:
\begin{flalign}
\begin{split}
\label{eqEx45}
& y^{n,k}_i \ge \sum_{P\in \C{P}_i} f^{n,k}_i(P) \ge \sum_{e\in \inedge(d_i)}g^{n,k}_{i,e} \\
\ge& (1-\frac{1}{k})y^{n,k}_i - \frac{1}{n} \ge (1-\frac{1}{k})(R'_2-\frac{1}{k}) - \frac{1}{n}
\end{split}
\end{flalign}
Due to (\ref{eqEx44}), it can be seen that:
\begin{flalign*}
& \frac{1}{2}y^{n,k}_i - 1 \le (1-\frac{1}{k})y^{n,k}_i - \frac{1}{n} \\
\le & \frac{1}{n} I(Y_i; U_{\inedge(d_i)} | Y_{1:i-1}) \\
\le & \frac{1}{n} \sum_{e\in \inedge(d_i)} H(U_e) \le 2(1+\frac{1}{k}) \le 3
\end{flalign*}
This means that $y^{n,k}_i \le 8$.
Clearly, all $y^{n,k}_i$'s, $u^{n,k}_{e'}$'s, $g^{n,k}_{i,e}$'s and $f^{n,k}_i(P)$'s have upper bounds.
Thus, there exists a sub-sequence $(n_l,k_l)^{\infty}_{l=1}$ such that $y^{n_l,k_l}_i$, $u^{n_l,k_l}_{e'}$, $g^{n_l,k_l}_{i,e}$ and $f^{n_l,k_l}_i(P)$ approach finite limits when $l\rightarrow \infty$.
Define the following notations:
\begin{flalign*}
y_i = \lim_{l\rightarrow \infty} y^{n_l,k_l}_i \quad u_{e'} = \lim_{l\rightarrow \infty} u^{n_l,k_l}_{e'} \quad g_{i,e} = \lim_{l\rightarrow \infty} g^{n_l,k_l}_{i,e}
\end{flalign*}
Clearly, the following inequalities holds:
\begin{flalign*}
u_{e'} \le 1 \quad g_{i,e} \le u_e \le 1
\end{flalign*}
Define the following routing scheme:
\begin{flalign*}
f_i(P) = \begin{cases}
\lim_{l\rightarrow \infty} f^{n_l,k_l}_i(P) & \text{if } P\in \C{P}_i \\
0 & \text{otherwise}.
\end{cases}
\end{flalign*}
We will prove that this routing scheme satisfies (\ref{eqRoutingCond1}) and (\ref{eqRoutingCond2}).
According to (\ref{eqEx44}), we see that $\sum_{P\in \C{P}_i} f_i(P) \ge R'_2$, and thus, (\ref{eqRoutingCond1}) is satisfied.
Moreover, due to (\ref{eqEx42}), (\ref{eqRoutingCond2}) is satisfied for $e\in \bigcup^3_{i=1}C_i$.
For $a_3$, we have:
\begin{flalign*}
& f^{n,k}_2(P_{21}) + f^{n,k}_2(P_{22}) \\
=& \frac{1}{n} I(Y_2; U_{\{e_3,e_4\}} | Y_1) \overset{(c)}{\le} \frac{1}{n} I(Y_2; U_{a_3} | Y_1) \\
\le& H(U_{a_3}) \le 1 + \frac{1}{k}
\end{flalign*}
where $(c)$ is due to the fact that $U_{\{e_3,e_4\}}$ is a function of $U_{a_3},Y_1$.
Thus, $f_2(P_{21}) + f_2(P_{22}) \le 1$, and (\ref{eqRoutingCond2}) is satisfied for $a_3$.
Using similar arguments, we can prove that (\ref{eqRoutingCond2}) is satisfied for $a_1,a_5$.
Now consider $b_4$.
Due to (\ref{eqEx45}), the following equations hold:
\begin{flalign}
\label{eqEx46}
y_2 = g_{2,b_3} + g_{2,b_4} = f_2(P_{21}) + f_2(P_{22}) + f_2(P_{23})
\end{flalign}
Meanwhile, since $U_{b_3}$ is a function of $U_{e_3}, Y_1$, the following equations hold:
\begin{flalign*}
f^{n,k}_2(P_{21}) = \frac{1}{n} I(Y_2;U_{e_3}|Y_1) \ge \frac{1}{n} I(Y_2;U_{b_3}|Y_1) = g^{n,k}_{2,b_3}
\end{flalign*}
Hence, $f_2(P_{21}) \ge g_{2,b_3}$.
Combining with (\ref{eqEx46}), we have:
\begin{flalign*}
f_2(P_{22}) + f_2(P_{23}) \le g_{2,b_3} \le 1
\end{flalign*}
Hence, (\ref{eqRoutingCond2}) holds for $b_3$.
Similarly, we can prove that (\ref{eqRoutingCond2}) holds for $b_2,b_6$.
It can be easily seen that for all the other edges, (\ref{eqRoutingCond2}) also holds.
Therefore, we have proved that $\B{R}\in \C{R}_r$.
This means that $\C{R}_{nc} \subseteq \C{R}_r$, and the network is routing-optimal.
\myqed
\end{example}

\section{Conclusion \label{secConclusion}}

In this paper, we present a class of routing-optimal networks, called information-distributive networks, defined by three topological features.
Due to these features, there is always a routing scheme that achieves the same rate vector as network coding such that the traffic transmitted through the network is the information distributed over the cut-sets between the sources and the sinks in the corresponding network coding scheme.
We then present some examples of information-distributive networks related to index coding and single unicast with hard deadline constraint.

\section*{Acknowledgment}
The authors would like to thank Minghua Chen for insightful discussions on the problem of network coding with hard deadline constraints, which inspired this follow-up work (see Subsection IV.B). Chun Meng was visiting, and was supported by, the Network Coding Institute of Hong Kong at that point (Aug. 2012 - June 2013).

This work was supported by NSF Awards 0747110 (CAREER) and 1028394, AFOSR MURI Award FA9550-09-0643.

\appendices

\section{Useful Tools \label{appTools}}

In this section, we present some useful tools to be used in the sequel.

\begin{proposition}
\label{propIt1}
The following equations hold:
\begin{enumerate}
\item $H(X|Y) = H(X|Y,f(Y))$.
\item $I(X;Y|Z) = I(X;Y|Z,f(Z))$.
\item $H(X|f(Y)) \ge H(X|Y)$.
\item $I(X;Y|Z,W) \ge I(X;f(Y,Z)|Z,W)$.
\end{enumerate}
\end{proposition}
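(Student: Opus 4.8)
The plan is to derive all four items from two elementary principles of Shannon information: first, that a deterministic function of the variables being conditioned on (or observed) contributes no new entropy or information; second, that conditioning reduces entropy, used together with the chain rule and the nonnegativity of conditional mutual information. I would begin by isolating the single workhorse observation that underlies everything: if $g$ is a deterministic function of a collection of random variables $W_1,\dots,W_m$, then $H(g \mid W_1,\dots,W_m) = 0$, so appending $g$ to that conditioning list changes neither any conditional entropy nor any conditional mutual information. Each of the four claims is then a short bookkeeping argument built on this fact.

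For (1), I would expand $H(X \mid Y, f(Y))$ as a weighted sum over realizations of $Y$: for each value $y$, the event $\{f(Y) = f(y)\}$ is implied by $\{Y = y\}$, so the conditioning event is unchanged and each term equals $H(X \mid Y = y)$, giving $H(X \mid Y, f(Y)) = H(X \mid Y)$. For (2), I would write the conditional mutual information as a difference of conditional entropies,
\begin{flalign*}
I(X;Y \mid Z, f(Z)) = H(X \mid Z, f(Z)) - H(X \mid Y, Z, f(Z)),
\end{flalign*}
and apply (1) twice: once with the conditioning variable $Z$, and once with the conditioning pair $(Y,Z)$, noting that $f(Z)$ is a deterministic function of $(Y,Z)$ as well. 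The two applications collapse the right-hand side to $H(X \mid Z) - H(X \mid Y, Z) = I(X;Y \mid Z)$.

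For (3), I would chain (1) with the standard inequality ``conditioning reduces entropy'': $H(X \mid Y) = H(X \mid Y, f(Y)) \le H(X \mid f(Y))$, where the equality is (1) and the inequality drops $Y$ from the conditioning. For (4), I would first observe that $g := f(Y,Z)$ is a deterministic function of $(Y,Z,W)$, so by the workhorse fact $I(X; Y \mid Z, W) = I(X; Y, g \mid Z, W)$. Expanding this joint mutual information by the chain rule the other way gives $I(X; Y, g \mid Z, W) = I(X; g \mid Z, W) + I(X; Y \mid g, Z, W)$, and since the last term is a conditional mutual information it is nonnegative; dropping it yields $I(X;Y \mid Z, W) \ge I(X; f(Y,Z) \mid Z, W)$.

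I do not expect a genuine obstacle here, as all four are standard facts; the only point requiring care is the two-argument function in (4). There one of the arguments, $Z$, already sits inside the conditioning, so I must confirm that $f(Y,Z)$ is still a deterministic function of the full tuple $(Y,Z,W)$ before invoking $H(f(Y,Z) \mid Y,Z,W) = 0$ — this is what legitimizes treating $g$ as ``free'' to adjoin to the conditioning set. Once that bookkeeping is handled, the chain-rule manipulation and the nonnegativity of conditional mutual information finish the argument immediately.
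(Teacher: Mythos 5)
Your proof is correct, and while it rests on the same elementary toolkit as the paper's, the individual derivations are organized differently. Your item (2) matches the paper's exactly (difference of conditional entropies plus two applications of (1)). For (1), the paper expands the joint entropy $H(X,Y,f(Y))$ by the chain rule in two different orders and compares, whereas you argue pointwise over realizations of $Y$, noting that the event $\{Y=y\}$ already implies $\{f(Y)=f(y)\}$; both are valid, and your version makes the underlying mechanism more transparent. For (3), the paper expands $H(X,Y,f(Y))$ a third way to obtain the exact identity $H(X|f(Y)) = H(X|Y) + I(X;Y|f(Y))$, which identifies the slack in the inequality, while you get the inequality in one line from (1) together with ``conditioning reduces entropy'' via $H(X|Y) = H(X|Y,f(Y)) \le H(X|f(Y))$ --- shorter, at the cost of not exhibiting the gap term. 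For (4), the paper writes $I(X;Y|Z,W) - I(X;f(Y,Z)|Z,W) = H(X|f(Y,Z),Z,W) - H(X|Y,Z,W)$ and invokes (3) with the observation that $(f(Y,Z),Z,W)$ is a function of $(Y,Z,W)$; you instead adjoin $g = f(Y,Z)$ to the mutual information at zero cost (since $I(X;g|Y,Z,W)=0$) and expand $I(X;Y,g|Z,W)$ by the chain rule in the other order, dropping the nonnegative remainder $I(X;Y|g,Z,W)$ --- this is the standard data-processing-style decomposition and avoids routing through (3). You also correctly flag the one genuine subtlety, namely that $f(Y,Z)$ must be checked to be a deterministic function of the full conditioning tuple even though $Z$ sits inside the conditioning, which is precisely the point the paper handles with its parenthetical remark at the end of its proof of (4).
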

\begin{proof}
1) The following equation holds:
\begin{flalign}
\label{eqPropIt11}
\begin{split}
H(X,Y,f(Y)) =& H(Y) + H(X|Y) + H(f(Y)|X,Y) \\
=& H(Y) + H(X|Y)
\end{split}
\end{flalign}
Meanwhile, we have:
\begin{flalign}
\label{eqPropIt12}
\begin{split}
H(X,Y,f(Y)) =& H(Y) + H(f(Y)|Y) + H(X|Y,f(Y)) \\
=& H(Y) + H(X|Y,f(Y))
\end{split}
\end{flalign}
Combining Eq. (\ref{eqPropIt11}) and Eq. (\ref{eqPropIt12}), we have $H(X|Y) = H(X|Y,f(Y))$.

2) Due to 1), we can derive:
\begin{flalign*}
I(X;Y|Z,f(Z)) =& H(X|Z,f(Z)) - H(X|Y,Z,f(Z)) \\
=& H(X|Z) - H(X|Y,Z) \\
=& I(X;Y|Z)
\end{flalign*}

3) First, the following equalities hold:
\begin{flalign}
\label{eqPropIt13}
H(X,Y,f(Y)) = H(f(Y)) + H(X|f(Y)) + H(Y|X,f(Y))
\end{flalign}
Combining Eq. (\ref{eqPropIt11}) and Eq. (\ref{eqPropIt13}), we then have:
\begin{flalign*}
\begin{split}
H(X|f(Y)) =& H(X|Y) + H(Y) - H(f(Y)) - H(Y|X,f(Y)) \\
\overset{(a)}{=}& H(X|Y) + H(Y|f(Y)) - H(Y|X,f(Y)) \\
=& H(X|Y) + I(X;Y|f(Y)) \\
\ge & H(X|Y)
\end{split}
\end{flalign*}
where $(a)$ follows from the equation: $H(Y)=H(Y,f(Y))=H(f(Y))+H(Y|f(Y))$.

4) We have the following equations:
\begin{flalign*}
& I(X;Y|Z,W) - I(X;f(Y,Z)|Z,W) \\
=& H(X|Z,W) - H(X|Y,Z,W) - \\
& \quad [H(X|Z,W) - H(X|f(Y,Z),Z,W)] \\
=& H(X|f(Y,Z),Z,W) - H(X|Y,Z,W) \ge 0
\end{flalign*}
where the last inequality is due to 3) and the fact that $(f(Y,Z),Z,W)$ is a function of $(Y,Z,W)$.
\end{proof}

\begin{proposition}
\label{propMutualInfoIneq3}
If $Y\rightarrow (X,W) \rightarrow Z$, then $I(X;Y|W) \ge I(X;Y|W,Z)$ and $I(X;Y|W) \ge I(Z;Y|W)$.
As a special case, we have $I(X;Y|W) \ge I(X;Y|W,f(X,W))$ and $I(X;Y|W) \ge I(f(X,W);Y|W)$.
\end{proposition}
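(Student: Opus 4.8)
The plan is to prove both inequalities simultaneously by expanding the single quantity $I(X,Z;Y|W)$ via the chain rule in two different orderings and then invoking the Markov hypothesis. First I would rewrite the assumption $Y\rightarrow(X,W)\rightarrow Z$ in its equivalent information-theoretic form, namely $I(Y;Z|X,W)=0$, since conditional independence of $Y$ and $Z$ given $(X,W)$ is exactly what the Markov chain asserts.

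Next I would apply the chain rule for conditional mutual information to $I(X,Z;Y|W)$ in the ordering that peels off $X$ first:
\begin{flalign*}
I(X,Z;Y|W) = I(X;Y|W) + I(Z;Y|W,X).
\end{flalign*}
By the Markov condition the second term $I(Z;Y|W,X)=I(Y;Z|X,W)$ is zero, so this collapses to $I(X,Z;Y|W)=I(X;Y|W)$. Then I would expand the same quantity in the opposite ordering, peeling off $Z$ first:
\begin{flalign*}
I(X,Z;Y|W) = I(Z;Y|W) + I(X;Y|W,Z).
\end{flalign*}

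Combining the two expansions gives the key identity $I(X;Y|W)=I(Z;Y|W)+I(X;Y|W,Z)$. Since both summands on the right are non-negative, dropping either one yields the two desired inequalities: discarding $I(Z;Y|W)$ gives $I(X;Y|W)\ge I(X;Y|W,Z)$, and discarding $I(X;Y|W,Z)$ gives $I(X;Y|W)\ge I(Z;Y|W)$. Finally, for the special case I would simply set $Z=f(X,W)$; since a deterministic function of $(X,W)$ is trivially conditionally independent of $Y$ given $(X,W)$, the Markov hypothesis holds automatically and the two general inequalities specialize directly. I do not anticipate any real obstacle here — the only thing to be careful about is choosing the two chain-rule orderings correctly so that the vanishing term is precisely $I(Y;Z|X,W)$, and verifying that the degenerate Markov chain for $Z=f(X,W)$ genuinely satisfies the hypothesis rather than needing a separate argument.
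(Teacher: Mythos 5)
Your proposal is correct and matches the paper's own proof essentially line for line: both expand $I(X,Z;Y|W)$ by the chain rule in the two orderings, kill the term $I(Z;Y|X,W)$ using the Markov condition, and obtain both inequalities from non-negativity of the remaining terms, with the special case handled by observing that $Y\rightarrow(X,W)\rightarrow f(X,W)$ holds automatically. No gaps; nothing further is needed.
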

\begin{proof}
We have the following equations:
\begin{flalign*}
& I(X,Z;Y|W) = I(Z;Y|W) + I(X;Y|W,Z) \\
=& I(X;Y|W) + I(Z;Y|X,W) = I(X;Y|W)
\end{flalign*}
Thus, it must be that $I(X;Y|W) \ge I(X;Y|W,Z)$ and $I(X;Y|W) \ge I(Z;Y|W)$.
Since the following chain: $Y\rightarrow (X,W) \rightarrow f(X,W)$ holds, 
we must have $I(X;Y|W) \ge I(X;Y|W,f(X,W))$ and $I(X;Y|W) \ge I(f(X,W);Y|W)$.
\end{proof}

\section{Proofs for Information-Distributive Networks \label{appInfoDistr}}

\begin{proof}[Proof of Lemma \ref{lemmaDetermine}]
Let $S'_i$ denote the set consisting of the outgoing edges of $s_1,\cdots,s_{i-1}$.
Since each path from $s_j$ ($i \le j < K$) to $d_i$ must pass through an edge in $C_i$, $S'_i\cup C_i$ forms a cut-set between $\{s_1,\cdots,s_K\}$ and $d_i$.
Thus $U_{\inedge(d_i)}$ is a function of $U_{S'_i}, U_{C_i}$.
Meanwhile, $Y_{S'_i}$ is a function of $Y_{1:i-1}$.
Thus, $U_{\inedge(d_i)}$ is a function of $Y_{1:i-1}, U_{C_i}$.
According to Proposition \ref{propMutualInfoIneq3}, (\ref{eqDetermine}) holds.
\end{proof}

\begin{proof}[Proof of Lemma \ref{lemmaInfoDistr}]
Let $\C{T}$ be the permutation sequence as defined in Definition \ref{defDistributive}.
Consider an arbitrary edge $e\in \bigcup^K_{i=1} C_i$.
Without loss of generality, let $\C{W}(e) = \{C_{n_1},\cdots,C_{n_k}\}$, where $1\le n_1 < \cdots < n_k \le K$.
Then we have:
\begin{flalign*}
& \sum_{1\le i \le K, e\in C_i} I(Y_i;U_e | Y_{1:i-1},U_{T_i(e)}) \\
=& \sum^k_{i=1} I(Y_i;U_e | Y_{1:n_i-1}, U_{T_{n_i}(e)})
\end{flalign*}
For $k=1$, the following equation holds:
\begin{flalign*}
&\sum^k_{i=1}  I(Y_i;U_e | Y_{1:n_i-1}, U_{T_{n_i}(e)}) \\
=& I(Y_i:U_e | Y_{1:n_1-1}, U_{T_{n_1}(e)}) \le H(U_e)
\end{flalign*}
Hence, (\ref{eqTotalDistr}) holds for $k=1$.
We now consider the case $k>1$.
We will prove the following inequality holds for $1\le p \le k$:
\begin{flalign}
\begin{split}
\label{eq_inequality_1}
&\sum^k_{i=p} I(Y_i; U_e | Y_{1:n_i-1}, U_{T_{n_i}(e)}) \\
\le& I(Y_{n_p:n_k}; U_e | Y_{1:n_p-1}, U_{T_{n_p}(e)})
\end{split}
\end{flalign}
Clearly, (\ref{eq_inequality_1}) holds trivially for $p=k$.
Assume it holds for $p>1$.
We will prove it also holds for $p-1$.
\begin{flalign*}
& \sum^k_{i=p-1} I(Y_i; U_e | Y_{1:n_i-1}, U_{T_{n_i}(e)}) \\
\overset{(a)}{\le} & I(Y_{n_p:n_k}; U_e | Y_{1:n_p-1}, U_{T_{n_p}(e)}) + \\
& \hspace{.1cm} I(Y_{n_{p-1}}; U_e | Y_{1:n_{p-1}-1}, U_{T_{n_{p-1}}(e)}) \\
\overset{(b)}{=} & I(Y_{n_p:n_k}; U_e | Y_{1:n_p-1}, U_{T_{n_p}(e) \cap T_{n_{p-1}}(e)}, U_{T_{n_p}(e) - T_{n_{p-1}}(e)}) \\
& \hspace{.1cm}  + I(Y_{n_{p-1}}; U_e | Y_{1:n_{p-1}-1}, U_{T_{n_p}(e) \cap T_{n_{p-1}}(e)}, U_{T_{n_{p-1}}(e) - T_{n_p}(e)}) \\
\overset{(c)}{\le} & I(Y_{n_p:n_k}; U_e | Y_{1:n_p-1}, U_{T_{n_p}(e) \cap T_{n_{p-1}}(e)}) + \\
& \hspace{.1cm} I(Y_{n_{p-1}}; U_e | Y_{1:n_{p-1}-1}, U_{T_{n_p}(e) \cap T_{n_{p-1}}(e)}, U_{T_{n_{p-1}}(e) - T_{n_p}(e)}) \\
\overset{(d)}{=} & I(Y_{n_p:n_k}; U_e | Y_{1:n_p-1}, U_{T_{n_p}(e) \cap T_{n_{p-1}}(e)}, U_{T_{n_{p-1}}(e) - T_{n_p}(e)}) \\
& \hspace{.1cm} + I(Y_{n_{p-1}}; U_e | Y_{1:n_{p-1}-1}, U_{T_{n_p}(e) \cap T_{n_{p-1}}(e)}, U_{T_{n_{p-1}}(e) - T_{n_p}(e)}) \\
\overset{(e)}{=} & I(Y_{n_p:n_k}; U_e | Y_{1:n_p-1}, U_{T_{n_{p-1}}(e)}) + \\
& \hspace{.1cm} I(Y_{n_{p-1}}; U_e | Y_{1:n_{p-1}-1}, U_{T_{n_{p-1}}(e)}) \\
\le & I(Y_{n_p:n_k}; U_e | Y_{1:n_p-1}, U_{T_{n_{p-1}}(e)}) + \\
& \hspace{.1cm} I(Y_{n_{p-1}:n_p-1}; U_e | Y_{1:n_{p-1}-1}, U_{T_{n_{p-1}}(e)}) \\
\overset{(f)}{=} & I(Y_{n_{p-1}:n_k}; U_e | Y_{1:n_{p-1}-1}, U_{T_{n_{p-1}}(e)})
\end{flalign*}
where $(a)$ is due to our assumption that (\ref{eq_inequality_1}) holds for $p$;
$(b)$ is due to the equalities, $T_{n_p}(e) = (T_{n_p}(e) \cap T_{n_{p-1}}(e)) \cup (T_{n_p}(e) - T_{n_{p-1}}(e))$ and $T_{n_{p-1}}(e) = (T_{n_p}(e) \cap T_{n_{p-1}}(e)) \cup (T_{n_{p-1}}(e) - T_{n_p}(e))$;
$(c)$ is due to our premise that $\C{W}$ is distributive: for each $e'\in T_{n_p}(e) - T_{n_{p-1}}(e)$, $\alpha(e') \le n_k$, and thus $U_{e'}$ is a function of $Y_{1:n_k}$; therefore, according to Proposition \ref{propMutualInfoIneq3}, we have:
\begin{flalign*}
&I(Y_{n_p:n_k}; U_e | Y_{1:n_p-1}, U_{T_{n_p}(e) \cap T_{n_{p-1}}(e)}, U_{T_{n_p}(e) - T_{n_{p-1}}(e)}) \\
&\le I(Y_{n_p:n_k}; U_e | Y_{1:n_p-1}, U_{T_{n_p}(e) - T_{n_{p-1}}(e)})
\end{flalign*}
$(d)$ is also due to our premise that $\C{W}$ is distributive: for each $e'\in T_{n_{p-1}}(e) - T_{n_p}(e)$, $\alpha(e') \le n_p - 1$, and thus $U_{e'}$ is  a function of $Y_{1:n_p-1}$; therefore, the following equality holds according to Proposition \ref{propIt1}:
\begin{flalign*}
&I(Y_{n_{p-1}}; U_e | Y_{1:n_{p-1}-1}, U_{T_{n_p}(e) \cap T_{n_{p-1}}(e)}, U_{T_{n_{p-1}}(e) - T_{n_p}(e)}) \\
&= I(Y_{n_{p-1}}; U_e | Y_{1:n_{p-1}-1}, U_{T_{n_p}(e) \cap T_{n_{p-1}}(e)})
\end{flalign*}
$(e)$ is again due to $T_{n_{p-1}}(e) = (T_{n_p}(e) \cap T_{n_{p-1}}(e)) \cup (T_{n_{p-1}}(e) - T_{n_p}(e))$;
$(f)$ is due to chain rule of mutual information.
Thus, (\ref{eq_inequality_1}) holds for $p-1$.
This means that (\ref{eq_inequality_1}) must hold for all $1\le p \le k$.
Letting $p=1$ in (\ref{eq_inequality_1}), we have:
\begin{flalign*}
& \sum^k_{i=1} I(Y_i; U_e | Y_{1:n_i-1}, U_{T_{n_i}(e)}) \\
\le & I(Y_{n_1:n_k}; U_e | Y_{1:n_1-1}, U_{T_{n_1}(e)}) \le H(U_e)
\end{flalign*}
Thus, the lemma holds.
\end{proof}

Let $e$ be an edge that is passed through by at least one path in an extendable path-set sequence $\C{K}$.
According to the above definition, all the paths in $\C{K}$ that pass through $e$ must pass through a single edge in $\C{W}$.
We use $\mu_e$ to denote this edge, and refer to it as the \textit{representative} of $e$ in $\C{W}$.

\begin{proof}[Proof of Theorem \ref{thInfodistr}]
Let $\C{W}=\{C_i:1\le i \le K\}$ be a cumulative and distributive cut-set sequence, $\C{T}$ a permutation sequence for $\C{W}$ that satisfies the conditions of Definition \ref{defDistributive}, and $\C{K}=\{\C{P}_i: 1\le i \le K\}$ an extendable path-set sequence for $\C{W}$.
Let $\B{R}=(R'_i: 1\le i \le K)$ be an arbitrary rate vector in $\C{R}_{nc}$.
Therefore, for $\epsilon = \frac{1}{k} > 0$ ($k\in \BB{Z}_{> 0}$), there exists a network code which satisfies (\ref{eqNCAchieve1})-(\ref{eqNCAchieve3}).
In the rest of this proof, all the random variables are defined in this network code.

We then define the following routing scheme: for $1\le i \le K$,
\begin{flalign*}
f^{n,k}_i(P) = \begin{cases}
\frac{1}{n} I(Y_i; U_e | Y_{1:i-1}, U_{T_i(e)}) & \text{if } P\in \C{P}_i, e\in P\cap C_i; \\
0 & \text{otherwise}.
\end{cases}
\end{flalign*}
Since $\C{W}$ is cumulative, the following equation holds:
\begin{flalign}
\label{eqProofInfoDistr1}
\begin{split}
& \sum_{P\in \C{P}_{s_id_i}} f^{n,k}_i(P) = \sum_{P\in \C{P}_i} f^{n,k}_i(P) \\
=& \frac{1}{n} \sum_{e\in C_i} I(Y_i; U_e | Y_{1:i-1}, U_{T_i(e)}) \\
\overset{(a)}{=} & \frac{1}{n} I(Y_i; U_{C_i} | Y_{1:i-1}) \overset{(b)}{\ge} \frac{1}{n} I(Y_i; U_{\inedge(d_i)} | Y_{1:i-1})
\end{split}
\end{flalign}
where $(a)$ is due to (\ref{eqInfoDistr}), and $(b)$ is due to (\ref{eqDetermine}).
Define $\delta'_i = Pr(Y_i\text{ cannot be decoded from } U_{\inedge(d_i)}, Y_{1:i-1})$.
Clearly, $\delta'_i \le \delta_i \le \frac{1}{k}$.
Then, we can derive the following equation:
\begin{flalign*}
& \frac{1}{n} I(Y_i;U_{\inedge(d_i)}|Y_{1:i-1}) \\
=& \frac{1}{n}(H(Y_i|Y_{1:i-1}) - H(Y_i|U_{\inedge(d_i)}, Y_{1:i-1})) \\
\overset{(c)}{=}& \frac{1}{n}(H(Y_i) - H(Y_i|U_{\inedge(d_i)}, Y_{1:i-1})) \\
\overset{(d)}{\ge}& \frac{1}{n}(H(Y_i) - 1 - \delta'_i\log|\C{Y}_i|) \\
=&   (1-\delta'_i)\frac{1}{n} H(Y_i) - \frac{1}{n} \\
\overset{(e)}{\ge}& \bigg(1-\frac{1}{k}\bigg)\bigg(R'_i-\frac{1}{k}\bigg) - \frac{1}{n}
\end{flalign*}
where $(c)$ is due to the fact that $Y_i$ is independent from $Y_{1:i-1}$;
$(d)$ is due to Fano Inequality;
$(e)$ is due to (\ref{eqNCAchieveI2}).
Combining the above equation with (\ref{eqProofInfoDistr1}), the following inequality holds:
\begin{flalign}
\label{eqProofInfoDistr2}
\sum_{P\in \C{P}_{s_id_i}} f^{n,k}_i(P) \ge \bigg(1-\frac{1}{k}\bigg)\bigg(R'_i-\frac{1}{k}\bigg) - \frac{1}{n}
\end{flalign}
Let $e$ be an edge that is passed through by at least one path in $\C{K}$.
Since $\C{K}$ is extendable, the paths in $\C{K}$ that pass through $e$ must pass through $e$'s representative $\mu_e$ in $\C{W}$.
Hence, the following equation holds:
\begin{flalign}
\label{eqProofInfoDistr3}
\begin{split}
& \sum^K_{i=1}\sum_{P\in \C{P}_{s_id_i},e\in P} f^{n,k}_i(P) \\
=& \sum^K_{i=1}\sum_{P\in \C{P}_i,e\in P} f^{n,k}_i(P) \\
\le& \sum^K_{i=1}\sum_{P\in \C{P}_i, \mu_e\in P} f^{n,k}_i(P) \\
=& \frac{1}{n} \sum_{1\le i \le K, \mu_e\in C_i} I(Y_i;U_{\mu_e} | Y_{1:i-1}, U_{T_i(\mu_e)}) \\
\overset{(f)}{\le}& \frac{1}{n} H(U_{\mu_e}) \overset{(g)}{\le} 1 + \frac{1}{k}
\end{split}
\end{flalign}
where $(f)$ is due to (\ref{eqTotalDistr}); $(g)$ is due to (\ref{eqNCAchieveI1}).

Since each $f^{n,k}_i(P)$ has an upper bound, there exists a sequence $(n_l,k_l)^{\infty}_{l=1}$ such that for $1\le i \le K$, the sequence $(f^{n_l,k_l}_i(P))^{\infty}_{l=1}$ approaches a finite limit.
Define the following routing scheme:
\begin{flalign*}
f_i(P) = \begin{cases}
\lim_{l\rightarrow \infty} f^{n_l,k_l}_i(P) & \text{if } P \in \C{P}_i \\
0 & \text{otherwise}.
\end{cases}
\end{flalign*}
Due to (\ref{eqProofInfoDistr2}) and (\ref{eqProofInfoDistr3}), $f_i(P)$ satisfies (\ref{eqRoutingCond1}) and (\ref{eqRoutingCond2}).
Hence, $\B{R} \in \C{R}_r$.
This implies that $\C{R}_{nc} \subseteq \C{R}_r$, and the network is routing-optimal.
\end{proof}

\section{Proofs for Examples \label{appProofEx}}

\begin{proof}[Proof of Theorem \ref{thIndexCode}]
Assume $\C{W}$ is cumulative.
Hence, $G_1$ is information-distributive
According to Theorem \ref{thInfodistr}, $G_1$ is routing-optimal.
Since routing can achieve a common rate of at most $\frac{1}{K}$, $l_{min} =mK$.

Now assume $l_{min}=mK$.
We consider a side-information graph $G'=(V',E')$ \cite{index_code}, where $V'=\{1,\cdots,K\}$, and $E'=\{(j,i): X_i\in \C{H}_j,1\le i,j\le K\}$.
It has been shown that if $l_{min}=mK$, then $G'$ is acyclic \cite{index_code}.
We will show that $\C{W}$ is information-distributive.
Since $G'$ is acyclic, we can re-index the nodes in $G'$, such that if $(j,i)\in E'$, $j<i$.
Let $1\le i<j\le K$.
Consider a path $P$ from $s_j$ to $d_i$.
Since $(j,i)\notin E'$, $X_j\notin \C{H}_i$.
Thus, there is no directed edge from $s_j$ to $d_i$ in $G_1$, and $P$ must pass through $(u,v)\in C_i$.
Hence, $\C{W}$ is cumulative, and $G_1$ is information-distributive.
\end{proof}

\begin{proof}[Proof of Lemma \ref{lemmaDelayCumulative}]
Let $0\le i<j\le K$.
Assume there is a directed path $P$ from $s_j$ to $d_i$.
Let $P_1$ be the part of $P$ after $s[j]$.
Clearly, $P'=\{(s_i,s[i]),(s[i],s[i+1]),\cdots,(s[j-1],s[j])\} \cup P_1$ is a directed path from $s_i$ to $d_i$.
Since $C[i]$ is a cut-set between $s[i]$ and $d[i]$, $P'$ must pass through an edge $e[k]\in C[i]$.
Thus, $e[k] \in P$.
This means that $\C{W}$ is cumulative.
\end{proof}

Since the duration between $e[t]$ and $s[\alpha(e[t])]$ is $\delta(e)$, we have: 
\begin{flalign}
\label{eqDelayAlpha}
\alpha(e[t]) = t - \delta(e)
\end{flalign}

\begin{lemma}
\label{lemmaDelayDistributive}
If $C[0]$ is distributive, $\C{W}$ is distributive.
\end{lemma}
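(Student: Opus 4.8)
The plan is to build the permutation sequence required by Definition \ref{defDistributive} directly from the re-indexing of $C[0]$ supplied by the hypothesis, and then to verify that, under this choice, conditions (\ref{eqDistr1}) and (\ref{eqDistr2}) become exactly the two clauses of ``$C[0]$ is distributive'' rewritten through the identity $\alpha(e[t]) = t - \delta(e)$ of (\ref{eqDelayAlpha}). Concretely, let $T$ be the total order on $C[0]$ given by the re-indexing, and for each $0 \le t \le K$ let $T_t$ be its image under the time-shift bijection $e[\sigma] \mapsto e[\sigma+t]$, which carries $C[0]$ onto $C[t]$. Set $\C{T} = (T_t)_{t=0}^K$. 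Since each $T_t$ is a faithful copy of $T$, for $\tilde e = e_p[s] \in C[t]$ the prefix $T_t(\tilde e)$ is precisely the $t$-shift of $T(e_p[s-t])$; in particular the relation ``lies before'' is preserved by the shift that takes $C[0]$ to $C[t]$.

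Next I would set up the dictionary between $\C{W}$ and $C[0]$. An edge $\tilde e = e_p[s]$ lies in $C[t]$ iff $e_p[s-t] \in C[0]$, so the cuts of $\C{W}$ containing $\tilde e$ are indexed by $\{\, s - t' : e_p[t'] \in C[0],\ 0 \le s-t' \le K \,\}$, and these times $t'$ are exactly the members of the recurrent sequence of $e_p$ in $C[0]$. Thus $\C{W}(\tilde e)$ is a sub-range of a single recurrent sequence, and two cuts $C_{n_j}, C_{n_{j+1}}$ consecutive in the index order of $\C{W}(\tilde e)$ correspond to two appearances $e_p[t_{n_j}], e_p[t_{n_{j+1}}]$ consecutive in that recurrent sequence. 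Writing a generic element $e' = e_q[w]$ of the symmetric differences in (\ref{eqDistr1}) and (\ref{eqDistr2}) and pushing it back to $C[0]$ by the appropriate shift turns the memberships $e' \in T_{n_{j+1}}(\tilde e) - T_{n_j}(\tilde e)$ and $e' \in T_{n_j}(\tilde e) - T_{n_{j+1}}(\tilde e)$ into statements of the form ``$e_q[t_q] \in C[0]$ lies before $e_p[t_{n_j}]$ while a specified shift $e_q[t_q \pm (t_{n_{j+1}} - t_{n_j})]$ is absent from $C[0]$,'' and, via $\alpha(e') = w - \delta(e_q)$, turns the target bounds $\alpha(e') \le n_k$ and $\alpha(e') \le n_{j+1} - 1$ into $t_q - \delta(e_q) \le t_{n_j} - t_{n_1}$ and $t_q - \delta(e_q) \le t_{n_j} - t_{n_{j-1}} - 1$. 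These are exactly clauses (2) and (1) of the hypothesis, so each required inequality follows.

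The main obstacle is making this translation exact rather than merely morally correct, and two points need care. First, the membership clauses do not match verbatim: the hypothesis's antecedent asks that a specified time-shift of $e_q$ be \emph{absent} from $C[0]$, whereas $e' \in T_{n_j}(\tilde e) - T_{n_{j+1}}(\tilde e)$ also admits the case where that shift \emph{is} present in $C[0]$ but is ordered after the relevant appearance of $e_p$; one must check, using that every $T_t$ is the single order $T$ transported by a shift, that the comparison is made against the correct appearance, so that this extra case still falls within the reach of the hypothesis. Second, when some appearance $e_p[t']$ satisfies $s - t' \notin [0,K]$ the corresponding cut is absent from $\C{W}$, so $\C{W}(\tilde e)$ realizes only a contiguous sub-range of the recurrent sequence; one must confirm that ``consecutive in $\C{W}(\tilde e)$'' still means ``consecutive in the recurrent sequence'' and that the extremal index $n_k$ is read off correctly. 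Both are handled by choosing the re-indexing so that, within each recurrent sequence, the order of $T$ agrees with the session-index order it induces under the shift, after which (\ref{eqDelayAlpha}) makes the two families of bounds coincide term by term. Once these reconciliations are in place, every instance of (\ref{eqDistr1}) and (\ref{eqDistr2}) is discharged, $\C{T}$ witnesses that $\C{W}$ is distributive, and together with Lemma \ref{lemmaDelayCumulative} this supplies what Theorem \ref{thDelayInfoDistributive} requires.
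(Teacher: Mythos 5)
Your proposal follows essentially the same route as the paper's proof: define the permutation sequence $\C{T}$ by time-shifting the given re-indexing $T$ of $C[0]$, identify the cut-sets of $\C{W}$ containing a given edge with the recurrent sequence of that edge, and translate conditions (\ref{eqDistr1}) and (\ref{eqDistr2}) into clauses (2) and (1) of the hypothesis on $C[0]$ via the identity $\alpha(e[t]) = t - \delta(e)$ of (\ref{eqDelayAlpha}). The two subtleties you flag (the case where the shifted edge is present in $C[0]$ but ordered after the relevant appearance of $e_p$, and the boundary truncation of $\C{W}(\tilde e)$ to indices in $[0,K]$) are silently elided by the paper itself, whose proof reads ``doesn't appear before'' as outright absence of the shifted edge from $C[0]$, so your acknowledgment of them is added care rather than a deviation in approach.
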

\begin{proof}
Let $T[t]=(e_i[t_i+t])^k_{i=1}$, and define a permutation sequence $\C{T}=(T[t])^K_{t=0}$ for $\C{W}$.
We will prove that if $C[0]$ is distributive, $\C{T}$ satisfies (\ref{eqDistr1}) and (\ref{eqDistr2}).

Consider an edge $e_p[t_p]\in C[0]$.
Let$(e_p[t_{n_i}])^k_{i=1}$ be the recurrent sequence in $C[0]$, in which all the edges are time-shifted versions of $e_p$.
Without loss of generality, let $n_j=p$.
Next, consider $e_p[t_p+k]\in C[k]$.
Let $\C{W}(e_p[t_p+k])=\{C[t]:e_p[t_p+k] \in C[t], 0\le t\le K\}$ denote the subset of cut-sets which contain $e_p[t_p+k]$.
Clearly, $C[k-t_{n_{j+1}} + t_{n_j}]$ and $C[k+t_{n_j}-t_{n_{j-1}}]$ are the cut-sets in $\C{W}(e_p[t_p+k])$ that lies immediately before and after $C[k]$ respectively, and $C[k+t_{n_j}-t_{n_1}]$ is the last cut-set in $\C{W}(e_p[t_p+k])$.

Consider an edge $e_q[t_q+k]\in C[k]$ be an edge that lies before $e_p[t_p+k]$ in $T[k]$, but doesn't appear before $e_p[t_p+k]$ in $T[k-t_{n_{j+1}}+t_{n_j}]$.
This means that $e_q[t_q+t_{n_{j+1}}-t_{n_j}] \notin C[0]$.
Thus, the following equation holds:
\begin{flalign*}
\alpha(e_q[t_q+k]) = k + t_q - \delta(e_q) \overset{(a)}{\le} k + t_{n_j} - t_{n_1}.
\end{flalign*}
where $(a)$ is due to the premise that $C[0]$ is distributive.
Hence, (\ref{eqDistr1}) is satisfied.

Now assume that $e_q[t_q+k]\in C[k]$ lies before $e_p[t_p+k]$ in $T[k]$, but doesn't appear before $e_p[t_p+k]$ in $T[k-t_{n_j}+t_{n_{j-1}}]$.
This implies that $e_q[t_q-t_{n_j} + t_{n_{j-1}}] \notin C[0]$.
Thus, the following equation holds:
\begin{flalign*}
\alpha(e_q[t_q+k]) = k + t_q - \delta(e_q) \overset{(b)}{\le} k + t_{n_j} - t_{n_{j-1}} - 1
\end{flalign*}
where $(b)$ is again due to the premise that $C[0]$ is distributive.
Hence, (\ref{eqDistr2}) is satisfied.
$\C{W}$ is distributive.
\end{proof}

\begin{lemma}
\label{lemmaDelayExtendable}
If $\C{P}$ is extendable, $\C{K}$ is extendable.
\end{lemma}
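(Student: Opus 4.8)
The plan is to unwind both notions of extendability and track the time shifts carefully. Recall that $\C{K}=(\C{P}[t])_{t=0}^K$ is extendable in the sense of Definition \ref{defExtendable} exactly when, for every $0\le s<t\le K$ and every pair of overlapping paths $P_1\in\C{P}[s]$ and $P_2\in\C{P}[t]$, we have $P_1\cap C[s]=P_2\cap C[t]$. So I would fix such $s<t$ and such overlapping $P_1,P_2$, write $P_1=P_a[s]$ and $P_2=P_b[t]$ for the underlying paths $P_a,P_b\in\C{P}$, and aim to identify the single edge at which each of $P_1,P_2$ meets its shifted cut-set and show the two coincide.

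First I would translate the overlap back to the un-shifted paths. If $P_1$ and $P_2$ share an edge, that edge is a time-shifted copy $e[m]$ of some base edge $e$; de-shifting gives $e[m-s]\in P_a$ and $e[m-t]\in P_b$, so $P_a$ and $P_b$ each pass through a time-shifted copy of the very same base edge $e$. This is precisely the trigger for the hypothesis that $\C{P}$ is extendable, which then yields $e_a=e_b$ and $t_a-t_b=(m-s)-(m-t)=t-s$, where $e_a[t_a]$ and $e_b[t_b]$ denote the cut-set edges of $C$ traversed by $P_a$ and $P_b$ respectively.

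Next I would identify the relevant intersections as singletons. Because $\C{P}$ consists of $m=|C|$ edge-disjoint paths from $s_0$ to $d_0$ and $C$ is a minimum cut, a counting argument forces each $P_j$ to meet $C$ in exactly one edge, namely $P_j\cap C=\{e_j[t_j]\}$; time-shifting preserves this structure, so $P_1\cap C[s]=\{e_a[t_a+s]\}$ and $P_2\cap C[t]=\{e_b[t_b+t]\}$. Finally, combining $e_a=e_b$ with $t_a+s=t_b+t$ (which is just $t_a-t_b=t-s$ rearranged) shows these two singletons are identical, giving $P_1\cap C[s]=P_2\cap C[t]$, which is exactly the defining condition for $\C{K}$ to be extendable.

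I expect the substantive content to be entirely bookkeeping. The only place that needs care is the consistent accounting of time shifts when passing between $P_a,P_b$ and their shifts $P_a[s],P_b[t]$, together with the verification that each path meets the cut-set in a single edge so that the intersections are genuine singletons rather than larger sets. Once the overlap is correctly re-expressed on the base paths, the extendability of $\C{P}$ does all the real work, so I anticipate no genuine obstacle beyond index discipline.
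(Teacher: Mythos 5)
Your proof is correct and follows essentially the same route as the paper's: de-shift the overlapping edge back to the base paths, invoke the extendability of $\C{P}$ to obtain $e_a=e_b$ and $t_a-t_b=t-s$, and re-shift to identify the cut-set edges traversed by the two shifted paths. The one difference is that you additionally verify, via the edge-disjointness counting argument, that each path meets its cut-set in a singleton so that the set equality $P_1\cap C[s]=P_2\cap C[t]$ genuinely follows; the paper's proof stops at ``both paths pass through the same edge of $\C{W}$'' and leaves that (true, and needed) singleton step implicit.
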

\begin{proof}
Consider two paths $P_i,P_j\in \C{P}$.
Assume $P_i[k_1]$ overlaps with $P_j[k_2]$ at $e[t]$.
Thus, $e[t-k_1]\in P_i$ and $e[t-k_2]\in P_j$.
Since $\C{P}$ is extendable, this means that $e_i=e_j$ and
\begin{flalign*}
t_i-t_j = t-k_1 - (t-k_2) = k_2-k_1
\end{flalign*}
Note that $e_i[t_i+k_1]$ is the edge in $\C{W}$ that is passed through by $P_i[k_1]$.
We have:
\begin{flalign*}
e_i[t_i+k_1] = e_j[t_j+k_2] \in P_j[k_2] \cap C[k_2].
\end{flalign*}
Thus, $P_i[k_1]$ and $P_j[k_2]$ pass through the same edge $e_i[t_i+k_1]$ in $\C{W}$.
Hence, $\C{K}$ is extendable.
\end{proof}

\begin{proof}[Proof of Theorem \ref{thDelayInfoDistributive}]
Due to Lemmas \ref{lemmaDelayExtendable}, \ref{lemmaDelayCumulative} and \ref{lemmaDelayDistributive}, the theorem holds.
\end{proof}

\bibliographystyle{IEEEtran}

\begin{thebibliography}{1}
\providecommand{\url}[1]{#1}
\csname url@samestyle\endcsname
\providecommand{\newblock}{\relax}
\providecommand{\bibinfo}[2]{#2}
\providecommand{\BIBentrySTDinterwordspacing}{\spaceskip=0pt\relax}
\providecommand{\BIBentryALTinterwordstretchfactor}{4}
\providecommand{\BIBentryALTinterwordspacing}{\spaceskip=\fontdimen2\font plus
\BIBentryALTinterwordstretchfactor\fontdimen3\font minus
  \fontdimen4\font\relax}
\providecommand{\BIBforeignlanguage}[2]{{%
\expandafter\ifx\csname l@#1\endcsname\relax
\typeout{** WARNING: IEEEtran.bst: No hyphenation pattern has been}%
\typeout{** loaded for the language `#1'. Using the pattern for}%
\typeout{** the default language instead.}%
\else
\language=\csname l@#1\endcsname
\fi
#2}}
\providecommand{\BIBdecl}{\relax}
\BIBdecl

\bibitem{nonlinear_nc}
R.~Dougherty, C.~Freiling, and K.~Zeger, ``Insufficiency of linear coding in
  network information flow,'' \emph{IEEE Transactions on Information Theory},
  vol.~51, no.~8, pp. 2745--2759, 2005.

\bibitem{capacity_info_network}
N.~J. Harvey, R.~Kleinberg, and A.~R. Lehman, ``On the capacity of information
  networks,'' \emph{IEEE/ACM Transactions on Networking (TON)}, vol.~14,
  no.~SI, pp. 2345--2364, 2006.

\bibitem{cap_region_multisource}
X.~Yan, R.~W. Yeung, and Z.~Zhang, ``The capacity region for multi-source
  multi-sink network coding,'' in \emph{the Proceedings of IEEE International
  Symposium on Information Theory}, Nice, France, June 2007, pp. 116--120.

\bibitem{index_code}
Z.~Bar-Yossef, Y.~Birk, T.~Jayram, and T.~Kol, ``Index coding with side
  information,'' \emph{IEEE Transactions on Information Theory}, vol.~57,
  no.~3, pp. 1479--1494, 2011.


\bibitem{nc_first}
R.~Ahlswede, N.~Cai, S.-Y. Li, and R.~W. Yeung, ``Network information flow,''
  \emph{IEEE Transactions on Information Theory}, vol.~46, no.~4, pp.
  1204--1216, 2000.

\bibitem{kodialam2002allocating}
M.~Kodialam and T.~Lakshman, ``On allocating capacity in networks with path
  length constrained routing,'' in \emph{the Proceedings of Allerton Conference
  on Communication, Control, and Computing}, Monticello, IL, U.S.A., Sept.
  2002.

\bibitem{minghua_unpublished}
M.~Chen, personal communication, Feb. 2013.

\bibitem{delay_constrained}
C.~Wang and M.~Chen, ``Sending perishable information: Coding improves
  delay-constrained throughput even for single unicast,'' in \emph{the
  Proceedings of IEEE International Symposium on Information Theory}, Hawaii,
  U.S.A., June 2014.

\end{thebibliography}


\end{document}